\documentclass[submission,copyright,creativecommons]{eptcs}
 % Name of the event you are submitting to

\usepackage[utf8]{inputenc}
\usepackage{amsmath}
\usepackage{amsthm}
\usepackage{amssymb}
\usepackage{tikz}

\usepackage{caption}
\usepackage{subcaption}

\usepackage{comment}

\newtheorem{theorem}{Theorem}
\newtheorem{lemma}[theorem]{Lemma}

\usepackage{iftex}

\ifpdf
  \usepackage{underscore}         % Only needed if you use pdflatex.
  \usepackage[T1]{fontenc}        % Recommended with pdflatex
\else
  \usepackage{breakurl}           % Not needed if you use pdflatex only.
\fi

\title{Freezing 1-Tag Systems with States}
\author{Szil\'ard Zsolt Fazekas\thanks{Szil\'ard Zsolt Fazekas was supported by JSPS KAKENHI Grant Number JP23K10976.}
\institute{Akita University\\
Akita, Japan}
\email{szilard.fazekas@ie.akita-u.ac.jp}
\and
Shinnosuke Seki\thanks{Shinnosuke Seki was supported by JSPS KAKENHI Grant-in-Aids for Scientific Research (B) 20H04141 and (C) 20K11672.}
\institute{University of Electro-Communications\\
Tokyo, Japan}
\email{\quad s.seki@uec.ac.jp}
}

\begin{document}
\maketitle

\begin{abstract}
We study 1-tag systems with states obeying the freezing property that only allows constant bounded number of rewrites of symbols. We look at examples of languages accepted by such systems, the accepting power of the model, as well as certain closure properties and decision problems. Finally we discuss a restriction of the system where the working alphabet must match the input alphabet.
\end{abstract}

%\title{1TASS}
%\author{Szil\'ard Zsolt Fazekas, Shinnosuke Seki}
%\date{December 2019}

%\begin{document}

%\maketitle

%------------------------------
\section{Introduction}
%------------------------------

Tag systems are a class of deterministic string rewriting systems. In each step they read out the first letter, say $a$, of the current word along with the succeeding $n-1$ letters (where $n$ is a system parameter), refer to the system's transition table (function) $\delta$ with the letter read as a key, and append the word $\delta(a)$ at the end of the current word. The system stops if the key is a special halting symbol or if there are less than $n$ letters left in the word. A system in this class is often called an $n$-tag system including the value of $n$ explicitly. 
It is well known that $n$-tag systems are capable of simulating Turing machines for any $n \ge 2$  (see, e.g., \cite{CockeM1964, Cook2004}). 
The initial definition is stateless, has no additional storage or intricate rules describing its dynamics; hence such systems are ideal for being simulated in other computational models such as 1D cellular automata or molecular computation models, particularly in their early stage of development wherein knowledge and techniques for programming are yet to be developed. 

Recently, the oritatami model of RNA co-transcriptional folding has been introduced~\cite{GearyMSS2018}. In this model, an abstract RNA sequence folds upon itself greedily while being synthesized in order to achieve various computational tasks \textit{in-silico}. These tasks are usually relatively simple from the viewpoint of computational complexity theory, such as, the detection of some specific molecule for gene expression regulation~\cite{WattersSYLL2016}.

%With Geary, Meunier, and Schabanel, an author of this paper has introduced the \textit{oritatami} model of RNA co-transcriptional folding, in which an RNA sequence folds upon itself greedily while being synthesized in order to achieve various computational tasks \textit{in-vivo}, which are usually relatively simple from the viewpoint of computational complexity theory, such as detecting some specific molecule for gene expression regulation \cite{WattersSYLL2016}. 
The cyclic variant of tag systems (cts) introduced by Cook \cite{Cook2004} has been simulated in the oritatami model in order to prove its Turing completeness \cite{GearyMSS2018}; periodicity supposed for the sequence to be folded in the oritatami computation also favored this variant. 
With more tools available for oritatami programming including finite state control \cite{PchelinaSST2022} and the molecular implementation of the oritatami model within view, a class of tag systems or their functional enhancements with extra features that are not as computationally powerful as the Turing machine gets more significant. 

Including states enables even 1-tag systems to simulate Turing machines and they characterize the class of context-sensitive languages if all appended words are restricted to be of length at most 1, that is, the word never gets strictly longer than the initial one. This was shown a long time ago~\cite{zaiontz1976}, with such 1-tag systems with states being referred to as circular automata. 
A primary source of computability even under this restriction is that each ``cell'' of the input tape can be rewritten endlessly in an arbitrary manner. 
As it is experimentally not trivial to implement arbitrarily-rewritable media, it makes sense to suppose the \textit{freezing property}, under which a letter can be replaced with only a smaller letter according to some pre-determined order.

Many types of machines processing their input using some first-in-first-out storage have been investigated starting with queue automata and various restrictions thereof~\cite{Kutrib2018}. Such models use queues \emph{in addition to} their input tape and are generally quite powerful, whereas freezing 1-tag systems (Fr1TASS) are towards the lower end of computational complexity, as we will see. A model that is rather close in spirit to our subject is the iterated uniform finite transducer~\cite{KMMP22}. 
Such transducers can simulate freezing 1-tag systems in a straightforward manner, but are much more powerful in the general case due to the lack of the freezing property. 
However, limiting the number of so-called `sweeps' performed by these transducers by a function that is linear in the length of the input might produce systems that are similar in accepting power to our tag systems, although the bound on sweeps does not directly translate into the constant bound on the number of rewritings per position. Another recent computing device with a similarity to freezing tag systems is the one-way jumping automaton, which reads and erases letters on a circular tape~\cite{CFY16}, a behavior that can be simulated by freezing tag systems, but it is easy to see that those automata are strictly weaker than our current model.

In this paper, we explore basic properties of freezing 1-tag systems. In Section 2 we define the model and two different accepting modes borrowed from the theory of pushdown automata. We show that the accepting modes are equivalent in the general case. In Section 3 we start the study of the accepting power of the model. We can prove that it is between the regular and the context-sensitive languages and that it is not included in the context-free class, but at present we cannot show that the inclusion does not hold in the other direction either. Next, in Section 4 we show that the class of languages accepted is closed under Boolean operations and that with a simple idea one can trade off description size for time complexity when constructing systems for intersection or union. In Section 5 we study some fundamental decision problems such as emptiness, universality, equivalence, and show that they are not decidable by reduction to the Post Correspondence Problem. In Section 6 we discuss a restriction of Fr1TASS where the tape alphabet is the same as the input alphabet. In their restricted form these systems are much weaker and we can use a `computation flattening' argument to prove negative results about systems with accepting state mode. We conclude the paper with some remarks and suggestions for future research in Section 7.

%-------------------------------------------
\section{Preliminaries}
\label{sect:preliminaries}
%-------------------------------------------

Let $\Sigma$ be a finite alphabet and $\Sigma^*$ be a set of all words over $\Sigma$ including the empty word $\lambda$. 
The length of a word $w \in \Sigma^*$ is the number of letters that occur in $w$, and is denoted by $|w|$. 

A \textit{1-tag system with states} (1TASS) is a string rewriting system denoted by a tuple $(\Sigma, \Gamma, Q, q_0, F, \delta)$, where $\Sigma$ is a finite input alphabet, $\Gamma$ is a finite tape alphabet that includes $\Sigma$ as its subset, $Q$ is a finite set of states including the initial state $q_0$, $F \subseteq Q$ is a set of accepting states, and $\delta: Q \times \Gamma \to Q \times \Gamma^*$ is a transition function. %, and $\tau: Q \times \Sigma \to \Sigma^*$ is a \textit{tagging} function. 
The transition function is \textit{freezing} if, for all $q \in Q$, $a \in \Gamma$, and $(p, u) \in \delta(q, a)$, we have $|u| \le 1$ (length-non-increasing), and furthermore, either $u = \lambda$ or $u \le a$ according to some pre-defined order $\le$ over the elements of $\Gamma$. 
In this case, the 1TASS itself is also said to be freezing. We will refer to these systems as Fr1TASS.

A configuration of a freezing 1TASS $M = (\Sigma, \Gamma, Q, q_0, F, \delta, \tau)$ is a pair $(q, w)$ of the current state $q \in Q$ and a current word $w \in \Gamma^*$. 
Suppose this system is in the configuration $(p, a_1a_2 \cdots a_n)$ for some $p \in Q$, $n \ge 0$, and $a_1, a_2, \ldots, a_n \in \Gamma$. 
Then it can transfer to a configuration $(q, v)$ if $(q, b) \in \delta(p, a_1)$ and $v = a_2 \cdots a_n b$; in this case, we write $(p, a_1 \cdots a_n) \to_M (q, v)$. 
The reflexive and transitive closure of the relation $\to_M$ is denoted by $\to_M^*$. 
In the 1TASS, a word can be considered as being written on a cyclic tape along which a finite-state control moves in the clockwise direction while rewriting. 

\noindent{\bf Empty tape vs accepting state.} 
The model above can be introduced with or without deletion, which as we will see shortly, does not make a difference with respect to the accepting power. 
In the first case we allow transitions in which a letter is replaced by the empty word, effectively erasing the cell from the tape. 
In this scenario we can define acceptance conditions similar to pushdown automata: the machine accepts when the tape is empty (ET) or when the machine enters an accepting state (AS). 
Formally, starting from an initial configuration $(q_0, w)$, $M$ can accept an input word $w \in \Sigma^*$ in two different modes: \textit{by an accepting state} if $(q_0, w) \to_M^* (q_f, v)$ for some accepting state $q_f \in F$ and a word $v \in \Gamma^*$, while \textit{by the empty tape} if $(q_0, w) \to_M^* (q, \lambda)$ for some $q \in Q$. Where the distinction is necessary, the languages accepted by the system $M$ by accepting state and by empty tape will be denoted by $L(M)_{AS}$ and $L(M)_{ET}$, respectively. 
Note that the AS mode as defined here introduces a technical problem: a system in this mode either does not accept the empty word or it accepts every input. This is due to fact that accepting the empty word requires that the initial state is final, as well. However, then such a system accepts any input right away as it is already in a final state. Therefore, to simplify the presentation we allow AS mode systems to have a special transition from the initial state to a final state on reading $\lambda$ and require that such transition is only used when the input is empty. This allows stating our results in a more general form without emphasizing this caveat whenever talking about AS mode Fr1TASS. 

First we show that the two conditions lead to the same computational power, which simplifies our exposition further on as we will not have to specify the accepting mode. The following technical lemma states that AS mode machines do not actually need to erase any symbol from their tape.
\begin{lemma}\label{lem:non-erasing}
    For any Fr1TASS $A=(\Sigma, \Gamma_1, Q_1, q_1, F_1, \delta_1)$ there exists a Fr1TASS $B=(\Sigma, \Gamma_2, Q_1, q_1, F_1, \delta_2)$ such that $L(A)_{AS}=L(B)_{AS}$ and the transition function of $B$ does not erase symbols, that is, $\delta_2(q,a)=(q',\lambda)$ is not allowed for any $q,q'\in Q$ and $a\in \Gamma_2$ .
\end{lemma}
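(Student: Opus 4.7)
The plan is to replace every erasing transition of $\delta_1$ with a rewrite to a fresh \emph{dead} symbol $\bot$ that, once placed on the tape, behaves like a non-interactive placeholder. Concretely, set $\Gamma_2 = \Gamma_1 \cup \{\bot\}$ and extend the freezing order so that $\bot$ is strictly below every element of $\Gamma_1$. Define $\delta_2$ by three clauses: (i) $\delta_2(q,a) = \delta_1(q,a)$ whenever $\delta_1(q,a) = (q',b)$ with $b \in \Gamma_1$; (ii) $\delta_2(q,a) = (q',\bot)$ whenever $\delta_1(q,a) = (q',\lambda)$; (iii) $\delta_2(q,\bot) = (q,\bot)$ for every $q \in Q_1$. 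Clause (ii) is freezing since $\bot \le a$ for all $a \in \Gamma_1$, and clause (iii) is allowed because rewriting a letter by itself respects the order $\le$ and performs no genuine change. By construction $\delta_2$ never outputs $\lambda$.

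To prove $L(A)_{AS}=L(B)_{AS}$, I would maintain the simulation invariant that, at every point during $B$'s computation on input $w \in \Sigma^*$, (a) $A$ and $B$ occupy the same state, and (b) the sequence of non-$\bot$ letters on $B$'s tape, read cyclically starting at $B$'s head, coincides with $A$'s tape read from $A$'s head. For the forward direction $L(A)_{AS} \subseteq L(B)_{AS}$, given an accepting $A$-computation $(q_1,w) \to_A^* (q_f, v)$ with $q_f \in F_1$, I would build a matching $B$-computation step by step: before each simulated $A$-step, prepend as many clause-(iii) self-loops as necessary to rotate $B$'s head onto the next $\Gamma_1$-letter, then apply clause (i) or (ii) according to $A$'s next step. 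Clause (i) mirrors $A$'s rewrite verbatim, clause (ii) appends a $\bot$ where $A$ would delete, and (iii) permutes only $\bot$'s, so the invariant is preserved throughout. Hence $B$ enters $q_f$.

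For the converse, the key observation is that clause-(iii) transitions never change the state, so any state entered by $B$ either equals $q_1$ or is produced by a clause-(i) or clause-(ii) step. Projecting an accepting $B$-computation onto its clause-(i)/(ii) steps yields, by the same invariant, a legitimate $A$-computation on $w$ that visits the same accepting state. The corner case of accepting $\lambda$ is handled by copying the special AS-mode $\lambda$-transition from $A$ to $B$ as permitted by the preliminaries. The main bookkeeping obstacle I anticipate is verifying the cyclic invariant across long stretches of $\bot$'s created by bursts of consecutive simulated erasures, but this should reduce to a routine induction on the number of clause-(i)/(ii) steps executed so far.
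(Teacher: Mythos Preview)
Your proposal is correct and follows essentially the same construction as the paper: introduce a fresh smallest symbol (the paper uses $\Box$ where you use $\bot$), replace every erasing transition by a rewrite to that symbol, and add a self-loop on it at every state. The paper's proof simply asserts that this ``ensures $B$ performs the same computations as $A$'' without spelling out the cyclic simulation invariant, so your bidirectional argument is a more careful version of the same idea.
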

\begin{proof}
    The proof is straightforward. If $A$ does not erase symbols, then $B=A$ concludes the argument. If it does, then $B$ simulates all non-erasing transition of $A$ and for each erasing transition of $A$ of the form $\delta_1(q,a)=(q',\lambda)$ we set $\delta_2(q,a)=(q',\Box)$, where $\Box\in \Gamma_2\setminus\Gamma_1$ is a new symbol standing in for the positions erased by $A$. We set $\Box$ to be the smallest letter in $\Gamma_2$ and we add a loop $\delta_2(q,\Box)=(q,\Box)$ to each state $q$, ensuring that $B$ performs the same computations as $A$. 
\end{proof}

We now show that accepting modes ET and AS are equivalent.

\begin{lemma}[ET simulates AS]\label{lem:ETAS}
    For any Fr1TASS $A=(\Sigma, \Gamma_1, Q_1, q_1, F_1, \delta_1)$ there exists a Fr1TASS $B=(\Sigma, \Gamma_2, Q_1, q_1, F_1, \delta_2)$ such that $L(A)_{AS}=L(B)_{ET}$.
\end{lemma}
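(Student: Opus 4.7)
The plan is to apply Lemma~\ref{lem:non-erasing} first, replacing $A$ by an equivalent non-erasing Fr1TASS; call the result $A'$. This reduction is crucial: since $A'$ never erases a symbol on its own, the tape can only become empty through new transitions that $B$ will itself introduce, which isolates the behaviour needed to simulate AS acceptance by ET acceptance.

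For the construction of $B$, I would keep the same state set, initial state, and (now irrelevant) designated final set as $A'$, and take $\Gamma_2 = \Gamma_1$. For every non-accepting state $q \in Q_1 \setminus F_1$ and every $a \in \Gamma_1$, copy the original transition, $\delta_2(q,a) = \delta_1(q,a)$, so that $B$ simulates $A'$ verbatim as long as $A'$ has not yet entered $F_1$. For every accepting state $q \in F_1$ and every $a \in \Gamma_1$, override with $\delta_2(q,a) = (q,\lambda)$: once $B$ has entered an accepting state it remains there and deletes the remaining tape one cell at a time. Both kinds of transitions respect the freezing property, the first by the assumption on $A'$ and the second because erasing is always permitted.

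For correctness one argues both directions. If $(q_1, w) \to_{A'}^* (q_f, u)$ with $q_f \in F_1$ is an accepting AS computation, then up to the first moment this computation enters $F_1$ the transitions of $B$ and $A'$ coincide, and from $(q_f, u)$ the new erasing transitions drive $B$ to $(q_f, \lambda)$ in exactly $|u|$ steps, witnessing ET acceptance. Conversely, if $(q_1, w) \to_B^* (q, \lambda)$, then, since $A'$ is non-erasing and $B$'s only erasing transitions originate in $F_1$, the $B$-computation must pass through a configuration whose state lies in $F_1$; the prefix of the computation up to that point lifts to an AS-accepting computation of $A'$, hence of $A$.

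The only subtlety, which is essentially bookkeeping, is the empty input: the configuration $(q_1, \lambda)$ trivially has empty tape, so $B$ would accept $\lambda$ regardless of whether $A$ does through the special initial-to-final transition described in the preliminaries. This is the ET-mode analogue of the caveat already flagged for AS mode and can be absorbed by the same kind of convention (permitting a special initial transition used only on empty input). The main conceptual obstacle is precisely what Lemma~\ref{lem:non-erasing} was designed to neutralise: if $A$ were allowed to erase arbitrarily, intermediate configurations of $B$ could empty the tape without corresponding to any $A$-acceptance, breaking the converse direction; after the reduction to the non-erasing case, the simulation becomes routine.
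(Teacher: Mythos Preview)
Your proposal is correct and follows essentially the same approach as the paper: first invoke Lemma~\ref{lem:non-erasing} to assume $A$ is non-erasing, then add erasing self-loops $\delta_2(f,a)=(f,\lambda)$ at every accepting state $f$ and argue the two inclusions exactly as you do. Your treatment is in fact slightly more careful than the paper's, as you explicitly flag the empty-input edge case; the paper silently relies on the convention introduced in the preliminaries for that.
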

\begin{proof}
    In this case the ET machine will be almost identical to the AS one. To get the ET machine accepting the same language, we simply add erasing loops $\delta(f,a)=(f,\lambda)$ to all accepting states $f$ for all tape symbols $a$. By the definition of AS, when the machine reaches an accepting state, it halts, so we may assume w.l.o.g. that in $A$ there are no outgoing transitions from any accepting state, so adding the aforementioned transitions does not introduce non-determinism. If the AS machine reaches a final state by an input, the same input will take the ET machine into the same state, whereby it will erase all the remaining symbols from the tape. Conversely, by Lemma~\ref{lem:non-erasing} we may assume that $A$ never erases its tape, therefore the only words leading to an empty tape in $B$ will be the ones accepted by $A$.
\end{proof}

\begin{lemma}[AS simulates ET]
    For any Fr1TASS $A=(\Sigma, \Gamma_1, Q_1, q_1, F_1, \delta_1)$ there exists a Fr1TASS $B=(\Sigma, \Gamma_2, Q_1, q_1, F_1, \delta_2)$ such that $L(A)_{ET}=L(B)_{AS}$.
\end{lemma}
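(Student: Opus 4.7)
The plan is to mimic the classical PDA construction that converts empty-stack acceptance into accepting-state acceptance: introduce a fresh bottom-of-queue marker $\$$, carry it along during the simulation, and have $B$ accept precisely when $\$$ is the only symbol on its tape. The wrinkle peculiar to 1-tag systems is that a single transition can output at most one symbol, so $B$ cannot in one move both simulate $A$'s first step and append $\$$ at the end. To sidestep this I would enlarge the tape alphabet to $\Gamma_2 = \Gamma_1 \cup \{\hat a : a \in \Gamma_1\} \cup \{\$\}$, where each $\hat a$ is a single symbol encoding ``the letter $a$ together with the marker immediately after it''. Extending the freezing order by declaring $\$$ the minimum, each $\hat a$ strictly below every element of $\Gamma_1$, and $\hat a \le \hat b$ iff $a \le b$ keeps all new transitions freezing.

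I would augment the state set with a fresh initial state $q_0^B$, a verification state $q_v$, and a new final state $q_f$. $B$'s first transition reads $x_1$, performs $A$'s first step from $q_1$ on $x_1$, and appends either $\$$ (if $A$ erases) or $\hat v$ (if $A$ outputs $v$), thereby installing the marker. Thereafter the invariant I plan to maintain is that $B$'s tape contains exactly one marker symbol---a $\$$ or some $\hat a$---and that removing the $\$$, or replacing $\hat a$ by $a$, recovers $A$'s current tape. Accordingly, on a regular first letter $z \in \Gamma_1$, $B$ copies $\delta_1(q, z)$; when $\hat v$ is at the front, $B$ applies $\delta_1(q, v)$ but outputs the updated marker at the back ($\$$ if $A$ erases, $\hat w$ if $A$ outputs $w$), so the marker migrates; and when $\$$ is at the front, $B$ nondeterministically either shifts $\$$ to the back (continuing the simulation) or transitions to $q_v$, whose sole outgoing rule is $\delta_2(q_v, \$) = (q_f, \lambda)$. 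The empty input is handled by the AS convention's special $\lambda$-transition from $q_0^B$ to $q_f$.

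Given the invariant, correctness is short. Completeness: when $A$'s tape becomes empty, $B$'s tape becomes precisely $\$$, from which the verification branch reaches $q_f$. Soundness: the only entry into $q_f$ is via $q_v$, and the rule out of $q_v$ fires only on a second consecutive $\$$; but the transition into $q_v$ has just pushed $\$$ to the back, so a second $\$$ at the head forces $B$'s tape to have consisted of a single $\$$ to begin with, which by the invariant means $A$'s tape was empty. The step I expect to require the most care is verifying that the invariant survives the case where $\hat v$ is at the front and $A$'s step appends a letter: there the transition $\delta_2(q, \hat v) = (q', \hat w)$ must simultaneously strip the marker from the front and reinstall it at the back so that exactly one marker remains. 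The other three combinations of ``regular vs.\ $\hat v$ at the front'' with ``erase vs.\ append in $A$'' are routine bookkeeping, and the freezing condition is immediate from the chosen order on $\Gamma_2$.
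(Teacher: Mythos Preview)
Your construction has a real gap: at the step ``when $\$$ is at the front, $B$ nondeterministically either shifts $\$$ to the back \ldots\ or transitions to $q_v$'' you rely on nondeterminism, but Fr1TASS are deterministic---the transition function has signature $Q\times\Gamma\to Q\times\Gamma^*$, the paper is careful not to ``introduce non-determinism'' in the proof of Lemma~\ref{lem:ETAS}, and the concluding section explicitly treats nondeterministic Fr1TASS as a separate, strictly stronger model. With a single deterministic choice your machine fails: always shifting $\$$ never reaches $q_f$, while always entering the lone state $q_v$ discards the current state of $A$, so the simulation cannot resume when $A$'s tape was not yet empty.

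The repair is cheap: replace $q_v$ by a family $\{q_v^{\,q} : q\in Q_1\}$ and, on reading $\$$ in state $q$, deterministically move to $q_v^{\,q}$ while appending $\$$. From $q_v^{\,q}$ go to $q_f$ on $\$$, and on any $a\in\Gamma_1$ apply $\delta_1(q,a)$ and drop back into the ordinary simulation states. Your invariant (exactly one marker on the tape) and your freezing analysis survive this change unchanged.

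With that fix the argument is sound but takes a different route from the paper's. The paper first eliminates erasures entirely (rewriting them to a fresh minimum $\Box$ as in Lemma~\ref{lem:non-erasing}), marks the first cell via a marked copy of the alphabet, and duplicates the state set to carry a one-bit flag ``a non-$\Box$ symbol has been seen since the last pass over the mark''; two consecutive visits to the mark with the flag clear certify emptiness. You instead keep the erasures and track the sweep boundary with a floating marker encoded by $\$$ and the $\hat a$ symbols. Both approaches cost roughly a doubling of the state set; the paper's version avoids the $\hat{\cdot}$ encoding and its attendant case split, while yours hews closer to the classical PDA conversion.
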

\begin{proof}
Similar to the argument in Lemma~\ref{lem:non-erasing}, we can replace erasing transitions $\delta_1(q,a)=(q',\lambda)$ with $\delta_2(q,a)=(q',\Box)$ for some newly introduced smallest letter $\Box$, and add $\Box$-labeled loops to all states. We create two copies of the tape alphabet of $A$, marked and unmarked. Because of this, when the computation begins, $B$ can mark the first letter to keep track of the start of the input. All operations on that symbol will be done with marked symbols. We duplicate the whole state diagram of $A$, such that the two copies of the states will `remember' whether some symbol other than $\Box$ was read since last passing the marked start. If the marked start is read twice with no non-$\Box$ symbol in between, it means that on the given input $A$ erased the tape, so $B$ will transition to an accepting state.
\end{proof}

\subsection{Examples}

\noindent{\bf Example 1.}
Even over a unary alphabet the ability to repeatedly read the tape allows freezing 1TASS to accept complex languages, such as the well-known non-context-free language $\{a^{2^n}\mid n\geq 0\}$. The tag system in Fig.~\ref{fig:a2n} is intuitive and is essentially the same as for iterated uniform finite state transducers (\cite{KMMP22}, Lemma 20). The system erases every second occurrence of $a$ which means that in each sweep it halves the length of the remaining tape content. Together with marking the first position with a special symbol at the start this allows Fr1TASS to process correct inputs in logarithmically many sweeps in the length of the input.
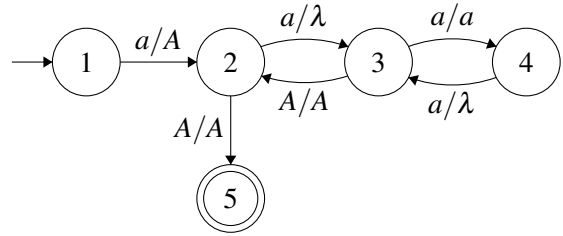
\begin{figure}
    \centering
    \begin{tikzpicture}[scale=0.15]
\tikzstyle{every node}+=[inner sep=0pt]
\draw [black] (8.4,-27.4) circle (3);
\draw (8.4,-27.4) node {$1$};
\draw [black] (21.2,-27.4) circle (3);
\draw (21.2,-27.4) node {$2$};
\draw [black] (34.3,-27.4) circle (3);
\draw (34.3,-27.4) node {$3$};
\draw [black] (21.2,-39.5) circle (3);
\draw (21.2,-39.5) node {$5$};
\draw [black] (21.2,-39.5) circle (2.4);
\draw [black] (47.4,-27.4) circle (3);
\draw (47.4,-27.4) node {$4$};
\draw [black] (1.8,-27.4) -- (5.4,-27.4);
\fill [black] (5.4,-27.4) -- (4.6,-26.9) -- (4.6,-27.9);
\draw [black] (11.4,-27.4) -- (18.2,-27.4);
\fill [black] (18.2,-27.4) -- (17.4,-26.9) -- (17.4,-27.9);
\draw (14.8,-26.9) node [above] {$a/A$};
\draw [black] (23.865,-26.042) arc (109.58784:70.41216:11.587);
\fill [black] (31.63,-26.04) -- (31.05,-25.3) -- (30.71,-26.24);
\draw (27.75,-24.87) node [above] {$a/\lambda$};
\draw [black] (21.2,-30.4) -- (21.2,-36.5);
\fill [black] (21.2,-36.5) -- (21.7,-35.7) -- (20.7,-35.7);
\draw (20.7,-33.45) node [left] {$A/A$};
\draw [black] (36.955,-26.021) arc (109.92269:70.07731:11.431);
\fill [black] (44.75,-26.02) -- (44.16,-25.28) -- (43.82,-26.22);
\draw (40.85,-24.84) node [above] {$a/a$};
\draw [black] (31.574,-28.635) arc (-72.4168:-107.5832:12.658);
\fill [black] (23.93,-28.64) -- (24.54,-29.35) -- (24.84,-28.4);
\draw (27.75,-29.73) node [below] {$A/A$};
\draw [black] (44.737,-28.762) arc (-70.34463:-109.65537:11.555);
\fill [black] (36.96,-28.76) -- (37.55,-29.5) -- (37.88,-28.56);
\draw (40.85,-29.94) node [below] {$a/\lambda$};
\end{tikzpicture}    
    \caption{Fr1TASS accepting the language $\{a^{2^n} \mid n\geq 0\}$.}
    \label{fig:a2n}
\end{figure}

\noindent{\bf Example 2.}
Our next example is the marked copy language $\{\#w\#w \mid w\in \{a,b\}^* \}$. It is well known that the language is not context-free by a simple application of the Bar-Hillel pumping lemma. A simple Fr1TASS as in Fig.~\ref{fig:ww} can accept this language by matching and erasing pairs of letters at the same distance from the two special markers $\#$ iteratively, accepting the language in linearly many sweeps.
\begin{figure}
    \centering
    \begin{tikzpicture}[scale=0.2]
\tikzstyle{every node}+=[inner sep=0pt]
\draw [black] (6.3,-28.2) circle (3);
\draw (6.3,-28.2) node {$I$};
\draw [black] (18.3,-28.2) circle (3);
\draw (18.3,-28.2) node {$S$};
\draw [black] (32.1,-17.1) circle (3);
\draw (32.1,-17.1) node {$A$};
\draw [black] (32.1,-38.9) circle (3);
\draw (32.1,-38.9) node {$B$};
\draw [black] (47.1,-17.1) circle (3);
\draw (47.1,-17.1) node {$1$};
\draw [black] (47.1,-38.9) circle (3);
\draw (47.1,-38.9) node {$2$};
\draw [black] (47.1,-28.2) circle (3);
\draw (47.1,-28.2) node {$M$};
\draw [black] (61.2,-28.2) circle (3);
\draw (61.2,-28.2) node {$3$};
\draw [black] (74.4,-28.2) circle (3);
\draw (74.4,-28.2) node {$4$};
\draw [black] (74.4,-38.9) circle (3);
\draw (74.4,-38.9) node {$Halt$};
\draw [black] (74.4,-38.9) circle (2.4);
\draw [black] (-0.3,-28.2) -- (3.3,-28.2);
\fill [black] (3.3,-28.2) -- (2.5,-27.7) -- (2.5,-28.7);
\draw [black] (9.3,-28.2) -- (15.3,-28.2);
\fill [black] (15.3,-28.2) -- (14.5,-27.7) -- (14.5,-28.7);
\draw (12.3,-28.7) node [below] {$\#/\$$};
\draw [black] (20.64,-26.32) -- (29.76,-18.98);
\fill [black] (29.76,-18.98) -- (28.83,-19.09) -- (29.45,-19.87);
\draw (23.49,-22.16) node [above] {$a/\lambda$};
\draw [black] (20.67,-30.04) -- (29.73,-37.06);
\fill [black] (29.73,-37.06) -- (29.4,-36.18) -- (28.79,-36.97);
\draw (23.43,-34.05) node [below] {$b/\lambda$};
\draw [black] (35.1,-17.1) -- (44.1,-17.1);
\fill [black] (44.1,-17.1) -- (43.3,-16.6) -- (43.3,-17.6);
\draw (39.6,-17.6) node [below] {$\#/\#$};
\draw [black] (30.777,-14.42) arc (234:-54:2.25);
\draw (32.1,-9.85) node [above] {$a/a,\mbox{ }b/b$};
\fill [black] (33.42,-14.42) -- (34.3,-14.07) -- (33.49,-13.48);
\draw [black] (35.1,-38.9) -- (44.1,-38.9);
\fill [black] (44.1,-38.9) -- (43.3,-38.4) -- (43.3,-39.4);
\draw (39.6,-39.4) node [below] {$\#/\#$};
\draw [black] (47.1,-20.1) -- (47.1,-25.2);
\fill [black] (47.1,-25.2) -- (47.6,-24.4) -- (46.6,-24.4);
\draw (47.6,-22.65) node [right] {$a/\lambda$};
\draw [black] (47.1,-35.9) -- (47.1,-31.2);
\fill [black] (47.1,-31.2) -- (46.6,-32) -- (47.6,-32);
\draw (47.6,-33.55) node [right] {$b/\lambda$};
\draw [black] (44.42,-29.523) arc (-36:-324:2.25);
\draw (39.85,-28.2) node [left] {$a/a,\mbox{ }b/b$};
\fill [black] (44.42,-26.88) -- (44.07,-26) -- (43.48,-26.81);
\draw [black] (50.1,-28.2) -- (58.2,-28.2);
\fill [black] (58.2,-28.2) -- (57.4,-27.7) -- (57.4,-28.7);
\draw (54.15,-28.7) node [below] {$\$/\$$};
\draw [black] (34.439,-15.227) arc (123.69978:14.54219:17.247);
\fill [black] (34.44,-15.23) -- (35.38,-15.2) -- (34.83,-14.37);
\draw (51.76,-12.92) node [above] {$a/\lambda$};
\draw [black] (60.646,-31.145) arc (-15.64871:-123.97459:17.227);
\fill [black] (34.43,-40.78) -- (34.81,-41.65) -- (35.37,-40.82);
\draw (51.64,-43.21) node [below] {$b/\lambda$};
\draw [black] (64.2,-28.2) -- (71.4,-28.2);
\fill [black] (71.4,-28.2) -- (70.6,-27.7) -- (70.6,-28.7);
\draw (67.8,-28.7) node [below] {$\#/\#$};
\draw [black] (74.4,-31.2) -- (74.4,-35.9);
\fill [black] (74.4,-35.9) -- (74.9,-35.1) -- (73.9,-35.1);
\draw (74.9,-33.55) node [right] {$\$/\$$};
\draw [black] (73.473,-31.052) arc (-20.97143:-159.02857:29.047);
\fill [black] (73.47,-31.05) -- (72.72,-31.62) -- (73.65,-31.98);
\draw (46.35,-50.2) node [below] {$\#/\#$};
\draw [black] (30.777,-36.22) arc (234:-54:2.25);
\draw (32.1,-31.65) node [above] {$a/a,\mbox{ }b/b$};
\fill [black] (33.42,-36.22) -- (34.3,-35.87) -- (33.49,-35.28);
\end{tikzpicture}
    \caption{Fr1TASS accepting the language $\{\#w\#w \mid w\in \{a,b\}^* \}$.}
    \label{fig:ww}
\end{figure}
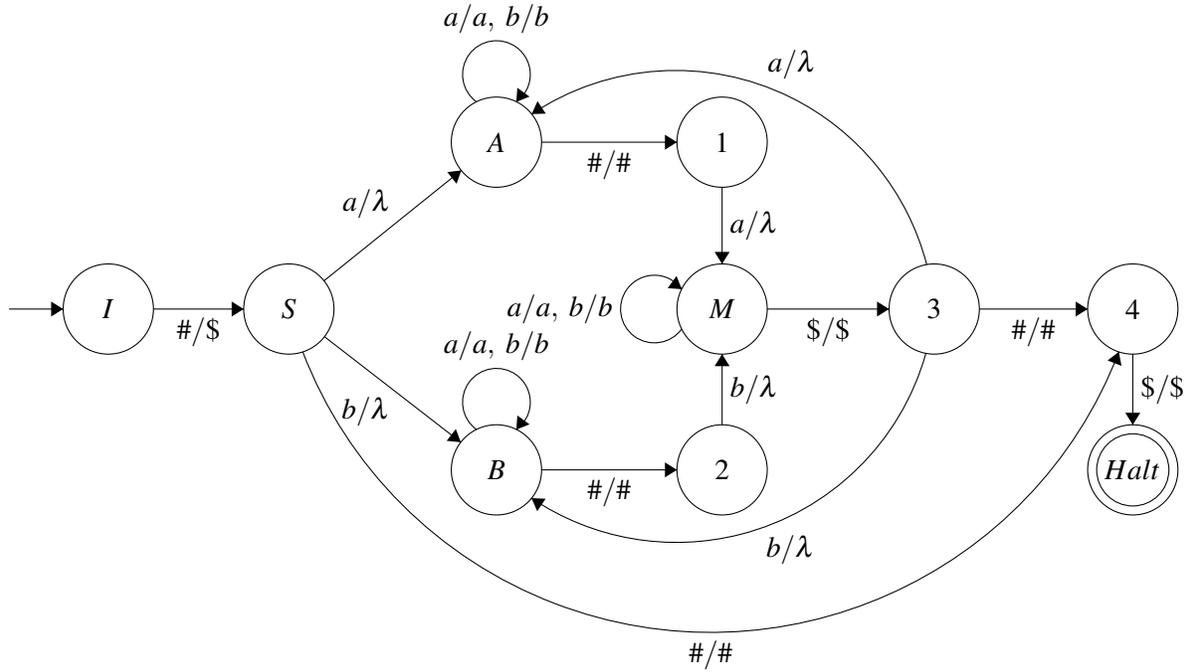

\noindent{\bf Example 3: accepting nondeterministic context-free languages.}
As will be detailed later, we were not able to prove that Fr1TASS cannot accept the language of palindromes, even though we conjecture that is the case. Our next idea was that perhaps such systems cannot even detect positions of the input at a certain ratio of the length from the beginning. Somewhat surprisingly, though, this proved to be false. Let us explain how a freezing 1TASS can detect the center of a given input. The idea can be adapted to detect positions at other linearly defined distances from the start.
The problem is formalized as follows: modify a given a freezing 1TASS so that, given an input $w = a_0 a_1 \cdots a_{n-1}$ of length $n \ge 0$, it can mark $a_{\lfloor n/2\rfloor}$ as a preprocess. 

%The following algorithm solves this problem. 
Solving this problem is equivalent to marking the letters in the first half somehow. 
Let $k = \lfloor n/2 \rfloor$. 
The following algorithm first marks $k$ letters of $w$ (Step~\ref{item:marking}), and then move each mark ``rightward'' across the first letter $a_0$, which is distinguished from the other letters (Step~\ref{item:initial}). 

\begin{enumerate}
\item\label{item:initial} Mark the first letter as $s_0 a_1 \cdots a_{n-1}$. 
\item\label{item:marking} Mark every other letter as $s_0 \overline{a_1} a_2 \overline{a_3} \cdots$. 
	This results in $s_0 \overline{a_1} a_2 \overline{a_3} \cdots a_{n-2} \overline{a_{n-1}}$ if $n$ is even, or $s_0 \overline{a_1} a_2 \overline{a_3} \cdots \overline{a_{n-2}} a_{n-1}$ if $n$ is odd. 
\end{enumerate}
The following steps will be repeated until step 3 is unsuccessful, i.e., no overlined position is preceded by an unoverlined one. The read-write head starts at $s_0$, after finishing steps 1 and 2.

\begin{enumerate}
\setcounter{enumi}{2}
\item Find the first overlined letter, say $\overline{a_j}$, after at least one unoverlined one. 
\item Remove the overline as $\overline{a_j} \to a_j'$ (prime is must due to the freezing property). 
\item Scan the tape until $s_0$, which might be overlined. 
\item Overline the first unoverlined letter, which might have been primed. 
\end{enumerate}
One overline per iteration is shifted to the first half of the word. After repeating Steps 3-6 $k$ times, Step 3 fails to find an unoverlined letter prior to an overlined one; thus the system escapes from this loop. 
At this point, the $k$ overlines have been all moved to the left half of the word as $\overline{s_0} \overline{a_1'} \overline{a_2} \overline{a_3'} \cdots \overline{a_{k-1}'} a_k \cdots a_{n-1}$. 
The first unoverlined letter is $a_{\lfloor n/2 \rfloor}$.

\section{Power of Fr1TASS}

Let $\Sigma=\{1, \dots, k\}$ be the ordered alphabet of a 1TASS $A$ and consider an input $w=a_1\cdots a_n$, where $a_i\in \Sigma$. We can perceive the computation performed by $A$ as happening in `sweeps' on a circular tape. A sweep starts at the first position and ends when we reach that position again. Formally, the length of a \emph{sweep} of computation can be defined inductively. Let $r_1=n$ and then $r_i$ (for $i>1$) is defined as the length of the remaining input $w_i$ after $i-1$ sweeps, i.e., if $m=\sum_{\ell=1}^{i-1}r_\ell$, then for some $q\in Q$ we have $(q_0,w)\rightarrow^{m}(q,w_i)$ and we set $r_i=|w_i|$. Assuming that the computation stops, the freezing property of $A$ imposes that one of the following is true for each $i>1$:
\begin{enumerate}
    \item $r_i<r_{i-1}$.
    \item $w_{i-1}=b_1\cdots b_{r_{i}}$ and $w_i=c_1\cdots c_{r_i}$ with $c_j\leq b_j$ for each $j\in \{1,\dots, r_i\}$, and there exists some $j\in\{1,\dots, r_i\}$ such that $c_j<b_j$.
    \item $w_{i-1}=w_i$, but the sweeps $i-1$ and $i$ start in different states.
\end{enumerate}
From here, we can put an upper bound on the number of sweeps in a computation, and therefore an upper bound on the number of steps. 
Case 1 can happen at most $n$ times. 
From Case 3 we get that there can be at most $|Q|$ consecutive sweeps that do not change the tape. 
The number of times Case 2 occurs is bounded by the total number of possible rewritings. 
If the tape content is $w=b_1\cdots b_n$ with $b_i\in\{1,\dots, k\}$ for all $1\leq i\leq n$, then each letter can be rewritten to a smaller letter at most $k-1$ times, which means that Case 2 cannot occur more than $\sum_{\ell=1}^{n}b_\ell$ times, and $\sum_{\ell=1}^{n}(b_\ell-1)\leq n(k-1)$. 
This means that the number of sweeps performed on an input of length $n$ is $O(n)$. 
In each sweep we make at most $n$ steps to the right and a Turing machine simulating Fr1TASS would need to make at most $n$ steps to the left at the end of each sweep to return to the beginning. 
Thus we can conclude that 
%Now consider two sweeps, $i$ and $j$, such that $|w_i|=|w_j|$, that is each consecutive round between $i$ and $j$ does not reduce the length of the input, but decreases at least one of its letters. A letter $b\in\Sigma$ can only be decreased $b-1$ times. This means that if $w_i=b_1\cdots b_{r_i}$ then $j-i\leq \sum_{\ell=1}^{r_i}(b_\ell -1)$. As $b_\ell \leq k$ for each $\ell$, this gives us $j-i\leq r_i (k-1)$. Together with $r_1=n$, we get that in the worst case the number of occurrences of Case 2 is: 
%\begin{equation} \label{eq1}
%\begin{split}
%& n(k-1)+1+(n-1)(k-1)+1+\cdots +k-1+1 \\
%= & (k-1)(n+n-1+\cdots +1)+n \\
%= & (k-1)\frac{n(n+1)}{2}. 
%\end{split}
%\end{equation}
the class of languages accepted by freezing 1TASS is included in DTIME($n^2$). 
%As a corollary, we get that the $\mathcal(L)_\mathrm{Frz1Tass}\subseteq \mathcal{CSL}$ inclusion is strict, in fact, deterministic freezing 1TASS is strictly weaker than deterministic non-expanding 1TASS (=DCSL=DSPACE($n$)), because DSPACE$(n)$ includes languages which are not in DTIME$(n^2)$ (by a diagonalization argument). Adapting the argument to accepting paths in nondeterministic 1TASS, we get the strict inclusion of freezing non-deterministic 1TASS in CSL. (same diagonalization)\\

The class of languages accepted by Fr1TASS strictly includes the class of regular languages. We can see that the inclusion is strict (even for unary languages) from the examples of the previous section. 
In order to show that the inclusion holds, we simulate a deterministic finite automaton by an AS mode Fr1TASS. 
The construction is simple but we need to bear in mind the fact that if the Fr1TASS reaches a final state reading a word $w$ consisting only of symbols of the input alphabet, that results in the system accepting any word from the language $w\Sigma^*$ according to the definition.

\begin{lemma}
For any regular language $R$ there exists a Fr1TASS $A$ such that $L(A)_{AS}=R$.    
\end{lemma}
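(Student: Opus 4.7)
The plan is to simulate a DFA $D=(Q_D,\Sigma,\delta_D,q_{D,0},F_D)$ for $R$ by a Fr1TASS $A$, taking care to delay entry into the accepting state until the entire input has been scanned --- precisely the pitfall flagged just before the lemma. I would take the tape alphabet $\Gamma=\Sigma\cup\{a' \mid a\in\Sigma\}\cup\{s_0\}$ and equip it with any order in which $s_0$ is minimal and each $a'$ sits strictly below $a$, so that every rewrite used below respects the freezing condition.

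The state set of $A$ would consist of a fresh initial state $q_0^A$, the DFA states $Q_D$, a single accepting state $q_{\mathrm{acc}}$, and a trap state $q_{\mathrm{loop}}$. The first step $\delta(q_0^A,a)=(\delta_D(q_{D,0},a),s_0)$ appends the marker $s_0$ while passing the control to the DFA state reached after reading $a$; subsequent transitions $\delta(q,a)=(\delta_D(q,a),a')$ for $q\in Q_D$ and $a\in\Sigma$ carry out one DFA step while overwriting $a$ by its primed copy. After exactly $n$ steps on an input $w=a_0\cdots a_{n-1}$, the tape has become $s_0 a_1'\cdots a_{n-1}'$, the finite control holds the state $\delta_D^{*}(q_{D,0},w)$, and the next symbol to read is $s_0$. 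The $s_0$-transition then branches: if the stored state lies in $F_D$, move to $q_{\mathrm{acc}}$; otherwise move to $q_{\mathrm{loop}}$, which rewrites every symbol to itself and hence loops forever without ever reaching $q_{\mathrm{acc}}$.

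Correctness will then rest on two observations. First, $q_{\mathrm{acc}}$ is reachable only through the $s_0$-transition, and $s_0$ is produced only when the original first letter is consumed; consequently the accepting state can be entered only after the entire input has been processed, which neutralises the worry that reaching a final state early would force $A$ to accept every extension from $w\Sigma^*$. Second, the construction tracks the DFA state exactly across the sweep, so the decision made when $s_0$ is eventually read coincides with membership of $w$ in $R$. The empty-input case, when $\lambda\in R$, is handled separately by the special $\lambda$-transition $\delta(q_0^A,\lambda)=(q_{\mathrm{acc}},\lambda)$ that AS-mode systems are explicitly permitted to use.

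I do not foresee a substantive obstacle here; the only items to double-check are that the chosen order on $\Gamma$ together with the self-loops inside $q_{\mathrm{loop}}$ complies with the freezing definition, and that no unintended computation path fires $q_{\mathrm{acc}}$ while unprimed input symbols still occupy the tape. Both follow directly from the construction sketched above, and the equality $L(A)_{AS}=R$ then reduces to the standard DFA-simulation argument.
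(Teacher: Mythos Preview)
Your construction is correct and follows the same idea as the paper's proof: simulate the DFA while overwriting each input letter with a non-input symbol, and transition to the unique accepting state only upon encountering that symbol (which signals that the whole input has been consumed). The paper's version is marginally leaner---it uses a single blank symbol $\Box$ in place of your $s_0$ together with the primed copies, and it lets the computation halt by leaving the $\Box$-transition undefined at non-final DFA states rather than introducing an explicit trap state---but these are cosmetic differences, not a different approach.
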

\begin{proof}
    Let $M=(\Sigma, Q,q_0,F,\delta)$ be a deterministic finite automaton accepting $R$. We construct the Fr1TASS $A=(\Sigma, \Sigma\cup\{\Box\},Q\cup\{f_A\},q_0, \{f_A\}, \delta' )$, where $\Box\notin \Sigma$, $f_A\notin Q$, as follows. For each transition $\delta(q,a)=q'$ of the DFA, the Fr1TASS has a transition $\delta'(q,a)=(q',\Box)$. For each $f\in F$ we add the transitions $\delta'(f,\Box)=(f_a,\lambda)$. The system $A$ walks the same path in the transition diagram as $M$ does, but it replaces each letter by the $\Box$ symbol to mark it read. If reading the input takes the original system to a final state then the simulating Fr1TASS will have a $\Box$-labeled transition to a final state of its own. 
\end{proof}

Due to the AS and ET modes being equivalent, we can also construct an ET mode Fr1TASS for any regular language.
At the other extreme, it is easy to see that linear bounded automata can simulate Fr1TASS, which means that the class of languages accepted by these systems is included in the class of context-sensitive languages.
With respect to the other classes of the Chomsky hierarchy, we conjecture that the power of Fr1TASS is incomparable, but we lack the tools to show both sides of such statements. 
The examples from the previous section demonstrate that not every Fr1TASS language is context-free. We found the ability of Fr1TASS to mark the middle letter of a word counterintuitive, due to the fact that although (even one turn) pushdown automata can accept the related language of words with $a$ in the middle, it needs non-determinism to do so, as we will prove below. 
As Fr1TASS are deterministic, they cannot guess the middle and match the length of prefixes and suffixes. 
Nevertheless, nondeterminism is not required for Fr1TASS to mark the middle as shown in Example 3.

Consider $L_{a} = \{uav \mid |u|=|v| \}$. As we will show now, this language is not a deterministic context-free language. We will use the so-called DCFL pumping lemma below, due to Yu~\cite{YU89}.

\begin{lemma}\label{dcflpump}
Let $L$ be a deterministic context-free language. 
	Then there exists a constant $n$ for $L$ such that for any pair of words $w,w' \in L$ if
	\begin{enumerate}
		\item $w=xy$  and $w'=xz$, $|x|>n$, and
		\item first letter of $y$ = first letter of $z$,
	\end{enumerate}	
	then either 3. or 4. holds:
	\begin{enumerate}\setcounter{enumi}{2}
	\item there is a factorization $x=x_{1}x_{2}x_{3}x_{4}x_{5}$, with $|x_{2}x_{4}| \geq 1$ and $|x_{2}x_{3}x_{4}| \leq n$, 
	such that for all $i \geq 0$ we have that $x_{1}{x_{2}}^{i}x_{3}{x_{4}}^{i}x_{5}y$ and\  $x_{1}{x_{2}}^{i}x_{3}{x_{4}}^{i}x_{5}z$ are in L;
	\item there exist factorizations $x=x_{1}x_{2}x_{3}, y=y_{1}y_{2}y_{3}$ and $z=z_{1}z_{2}z_{3}$, with $|x_{2}| \geq 1$ and $|x_{2}x_{3}| \leq n, $
	such that for all $i \geq 0$ we have that $x_{1}{x_{2}}^{i}x_{3}y_{1}{y_{2}}^{i}y_{3}$ and $x_{1}{x_{2}}^{i}x_{3}z_{1}{z_{2}}^{i}z_{3}$ are in L.
	\end{enumerate}
\end{lemma}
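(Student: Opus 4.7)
The plan is to base the proof on the computation of a deterministic pushdown automaton $M = (Q, \Sigma, \Gamma, \delta, q_0, Z_0, F)$ recognising $L$. Since $M$ is deterministic, its computations on $xy$ and $xz$ perform identical moves while reading the common prefix $x$, so we may speak of a single sequence of configurations produced on $x$. Choose the constant $n$ as a sufficiently large polynomial in $|Q|$ and $|\Gamma|$ to support the pigeonhole arguments to follow. For each position $i \le |x|$ let $h(i)$ denote the minimum stack height attained from step $i$ onward while $x$ is still being consumed; call this the \emph{horizon}. Symbols above the horizon at step $i$ are popped before $x$ is finished, whereas symbols at or below $h(i)$ persist. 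The analysis splits on the behaviour of $h$ on the last $n$ positions of $x$.

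In the \emph{deep} case, there is an interval of length exceeding $|Q|\cdot|\Gamma|$ within the last $n$ positions of $x$ along which the stack strictly grows and nothing that is newly pushed is ever popped during the rest of the processing of $x$. Pigeonhole then produces two positions $i<j$ inside this interval sharing the same state and top-of-stack symbol. The factor of $x$ strictly between them can be deleted or iterated without disturbing the DPDA configuration reached after position $j$; therefore the continuations on $y$ and on $z$ are unaffected, and the resulting factorisation $x = x_1 x_2 x_3 x_4 x_5$ satisfies clause~3. In the complementary \emph{shallow} case, positions where the stack returns to its residual minimum height occur frequently in the last $n$ symbols. Pigeonhole on (state, top-of-stack) at such positions yields $i<j$ with matching signature; the factor $x[i+1..j]$ is then pumpable, but now its iteration leaves residual pushed material on the stack. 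To compensate, apply the same horizon analysis to the computations on $y$ and on $z$ (both of which start from the \emph{same} configuration, since the DPDA is deterministic and both suffixes are read after $x$) to obtain pumpable factors $y_2$ in $y$ and $z_2$ in $z$ whose repetition cancels the extra stack content generated by repeating $x_2$; this yields clause~4.

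The main technical obstacle is the synchronous pumping in clause~4: one has to ensure that the pumpable factor in the prefix is \emph{quantitatively} compatible with pumpable factors in each suffix, so that iterating $x_2$ an extra time exactly matches the net stack effect of iterating $y_2$ (and independently $z_2$) an extra time. Yu's original argument resolves this by tracking how many surface symbols are pushed per cycle in $x$ and matching that count against the net pop count of a corresponding cycle in each suffix, using that the suffix computations terminate in accepting configurations. The remaining work is bookkeeping: checking that $n$ was chosen large enough so that all the required repeated configurations lie in a single window of width $n$, which in turn yields the size bounds $|x_2 x_3 x_4| \le n$ in clause~3 and $|x_2 x_3| \le n$ in clause~4.
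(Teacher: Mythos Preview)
The paper does not prove this lemma at all: it is quoted verbatim from Yu (cited as \cite{YU89}) and used as a black box in the subsequent theorem that $L_a$ is not a DCFL. So there is no proof in the paper to compare your proposal against.

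As to the sketch itself, the overall strategy---fix a DPDA for $L$, use determinism to make the run on the shared prefix $x$ unique, then pigeonhole on (state, top-of-stack) pairs along that run---is indeed how Yu's argument goes, but your case split does not line up with the two conclusions. In your ``deep'' case you assume the material pushed during the matching interval survives to the end of $x$; then iterating $x[i{+}1..j]$ does \emph{not} leave the configuration after position $j$ undisturbed, because the configuration includes the full stack, which is now strictly taller. The continuations on $y$ and on $z$ may well pop into this extra material, so they are not ``unaffected'' as you claim. This is exactly the situation that forces clause~4, not clause~3. Clause~3, with both pumps inside $x$, arises instead when a rising segment is matched by a later \emph{descending} segment still inside $x$ (a ``hill'' entirely contained in $x$): $x_2$ is the ascent, $x_4$ is the descent, and the stack at the end of $x$ is literally unchanged after pumping. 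Your ``shallow'' case does not cover this either---you again say iteration ``leaves residual pushed material on the stack''---so as written both of your cases feed into clause~4 and the genuine clause~3 scenario is missing.
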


\begin{theorem}
$L_{a}=\{uav \mid |u|=|v|\}$ is not a deterministic context-free language.
\end{theorem}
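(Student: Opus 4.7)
The plan is to apply Lemma~\ref{dcflpump} to the two words $w = b^{N} a b^{N}$ and $w' = b^{N} a b^{N} a b^{2N+1}$, chosen with $N$ much larger than the pumping constant $n$. Both lie in $L_a$: the unique $a$ of $w$ is its middle letter, and the second $a$ of $w'$ occupies the middle position $2N+2$ of a word of length $4N+3$. Since $w$ is a prefix of $w'$, I factor them through the common prefix $x = b^{N} a b^{N-1}$, giving $y = b$ and $z = bab^{2N+1}$; both $y$ and $z$ begin with $b$, and $|x| = 2N > n$, so the hypotheses of the lemma are satisfied.

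To rule out conclusion (3), I argue by cases on where $x_2$ and $x_4$ sit in $x$. Any choice in which $x_2$ or $x_4$ contains the central $a$ fails at $i = 0$, since the pumped word then has no $a$ at all. If $x_2, x_4$ both lie in a single $b$-block of $x$, then the balance of $b$'s on the two sides of the unique $a$ of $xy$ is destroyed whenever $i \neq 1$, putting the word outside $L_a$. The only remaining configuration has $x_2 = b^{j} \subseteq b^{N}$ and $x_4 = b^{j'} \subseteq b^{N-1}$ flanking the $a$; the $y$-side forces $j = j'$, and with this choice the $z$-side yields $b^{N+(i-1)j} a b^{N+(i-1)j} a b^{2N+1}$, whose middle lies inside the inner $b$-block whenever $i \neq 1$ and is therefore a $b$.

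For conclusion (4), the bound $|x_2 x_3| \leq n$ together with $N > n+1$ forces $x_2 x_3$ into the trailing block $b^{N-1}$ of $x$, so $x_2 = b^{j}$ with $j \geq 1$. Because $y$ is a single letter, $y_2$ is either empty or equal to $b$, and in both cases the requirement $x_1 x_2^{i} x_3 y_1 y_2^{i} y_3 \in L_a$ reduces to $(i-1)(j+1) = 0$, which fails for every $i \neq 1$. Thus no factorization witnesses (4) either, Lemma~\ref{dcflpump} is contradicted, and $L_a$ is not a deterministic context-free language.

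The main obstacle is the last sub-case of (3), where $x_2$ and $x_4$ flank the central $a$ with equal lengths: this single factorization passes the $y$-side test, and defeating it is exactly why $w'$ is designed with its second $a$ at position $2N+2$, so that pumping shifts the middle of $xz$ into the central $b$-block rather than onto either $a$.
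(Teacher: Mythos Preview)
Your argument is correct and follows essentially the same route as the paper: both apply Yu's DCFL pumping lemma to a pair of words of the shape $b^{*}ab^{*}$ and $b^{*}ab^{*}ab^{*}$ sharing a long common prefix, and eliminate conclusions (3) and (4) by the same case analysis on where the pumped blocks sit relative to the $a$'s. Two small write-up slips (not affecting validity): for $i=0$ in your flanking sub-case of (3) the middle of $b^{N-j}ab^{N-j}ab^{2N+1}$ actually falls in the \emph{final} $b$-block rather than the inner one, and in (4) the case $y_2=\lambda$ reduces to $(i-1)j=0$ rather than $(i-1)(j+1)=0$; in both cases your conclusion that $i\neq 1$ gives a non-member still stands.
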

\begin{proof}
Suppose that $L_{a}$ were a DCFL and hence, that Lemma~\ref{dcflpump} applied. 
Let $n$ be the constant from the lemma and consider the words $x=b^{n+2}ab^{n+1}$, $y=a$ and $z=ab^{2n+4}$. 
It is easy to see that both $xy$ and $xz$ are in $L_{a}$ and long enough to meet the two conditions of the lemma, and the first letter of both $y$ and $z$ is $a$. Now we will show that assuming either conclusion of the lemma leads to a contradiction. First, suppose conclusion 3. is true. Depending on the factorization of $x$, we have the following cases:
\begin{enumerate}
    \item $x_1,x_2,x_4,x_5\in b^*$, $x_3\in b^*ab^*$: for $x_1x_2^0x_3x_4^0x_5y$ to be in $L_{a}$, we need $|x_2|=|x_4|$ and from the lemma we know they are not empty, so let $x_2=x_4=b^k$ for some positive $k\leq \frac{n}{2}$. However, then $x_1x_2^0x_3x_4^0x_5z=x_1x_2^0x_3x_4^0x_5ab^{2n+4}$, where the length of $x_1x_2^0x_3x_4^0x_5=2n+4-2k$, so the letter at the middle of the word is $b$, contradicting $x_1x_2^0x_3x_4^0x_5z\in L_{a}$.
    \item $x_1$ or $x_5$ contains $a$. In both cases $x_1x_2^0x_3x_4^0x_5y$ will result in a word with fewer $b$'s on one side of the first $a$ than the other, a contradiction.
    \item $x_2$ or $x_4$ has an $a$. In both cases $x_1x_2^0x_3x_4^0x_5y$ will have only one $a$, at the end of the word, a contradiction.
\end{enumerate}
Now suppose conclusion 4. is true. Since the factorization $x=x_1x_2x_3$ has the property $|x_2x_3|<n$ and $|x_2|\geq 1$, we know that $x_2=b^k$, for some positive $k\leq n$. However, this means that for any factorization $y=y_1y_2y_3$, the word $x_1x_2^0x_3y_1y_2^0y_3$ is of the form $b^{n+2}ab^{n+1-k}$ or $b^{n+2}ab^{n+1-k}a$. 
As neither of those is in $L_{a}$, because $k\geq 1$, we arrived at a contradiction again. 
Consequently, $L_a$ is not a deterministic context-free language.
\end{proof}

\section{Closure properties} 
We can show that the class of languages accepted by Fr1TASS forms a Boolean algebra, i.e., it is closed under union, intersection and complement. For the first two, we can adapt the classical construction used in the case of finite automata: the machine simulating the union/intersection of two others will have pairs of states representing the states of the starting machines and its alphabet will also consist of pairs of letters, to keep track of the tape of both simulated machines. 

\begin{theorem}
    The class of languages accepted by Fr1TASS is closed under union and intersection.
\end{theorem}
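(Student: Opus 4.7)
The plan is to adapt the classical product construction from finite-automata theory. By Lemma~\ref{lem:ETAS} it suffices to work in AS mode, and by Lemma~\ref{lem:non-erasing} we may assume the two given systems $A=(\Sigma,\Gamma_A,Q_A,q_0^A,F_A,\delta_A)$ and $B=(\Sigma,\Gamma_B,Q_B,q_0^B,F_B,\delta_B)$ are non-erasing. We further assume, without loss of generality, that both transition functions are total (otherwise add a non-accepting dead state with identity self-loops $(d,a)\mapsto(d,a)$) and that every accepting state $f$ carries identity self-loops $(f,a)\mapsto(f,a)$ for all tape symbols $a$; these self-loops respect freezing since $a\le a$, and they let a component that has already reached acceptance idle while its partner keeps computing.

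Next I would construct a product Fr1TASS $M$ with tape alphabet $\Gamma_M=\Sigma\cup(\Gamma_A\times\Gamma_B)\cup(\Gamma_A\times\Gamma_B\times\{\bullet\})$ and state set $\{s_0,s_I\}\cup(Q_A\times Q_B)$, with $s_0$ initial. The computation unfolds in two phases. During initialization, $M$ performs one preliminary sweep that rewrites each input letter $a\in\Sigma$ into a pair $(a,a)$, marking the very first cell as $(a,a,\bullet)$: the transition from $s_0$ on $a$ writes $(a,a,\bullet)$ and enters $s_I$, while transitions from $s_I$ on a plain letter $a\in\Sigma$ write $(a,a)$ and stay in $s_I$. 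When the marked cell is re-encountered from $s_I$, $M$ launches the simulation phase by applying the first transitions of both $A$ and $B$, updating the marked cell to $(u_A,u_B,\bullet)$, and moving to $(p_A,p_B)$ where $\delta_A(q_0^A,a)=(p_A,u_A)$ and $\delta_B(q_0^B,a)=(p_B,u_B)$. During simulation, a cell $(x,y)$ read in state $(p_A,p_B)$ is rewritten to $(u_A,u_B)$ and the state moves to $(p_A',p_B')$, where $\delta_A(p_A,x)=(p_A',u_A)$ and $\delta_B(p_B,y)=(p_B',u_B)$; marked cells are handled analogously and retain their marker. Each step of $M$ thereby performs one step of each of $A$ and $B$ in lockstep.

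To certify freezing, I would totally order $\Gamma_M$ by placing every letter of $\Sigma$ strictly above every pair (so that $a\mapsto(a,a)$ or $(a,a,\bullet)$ is a descent) and ordering pairs by any linear extension of the product order $(u_A,u_B)\preceq(a_A,a_B)\iff u_A\le_A a_A\wedge u_B\le_B a_B$, with the marker component immaterial. The simulation rewrite $(a_A,a_B)\mapsto(u_A,u_B)$ descends in this order because both components descend by the freezing property of $A$ and $B$. I would set the accepting states of $M$ to $F_A\times F_B$ for intersection and to $(F_A\times Q_B)\cup(Q_A\times F_B)$ for union; correctness reduces to the observation that the simulation phase faithfully reproduces the tapes and states of $A$ and $B$ in parallel, and that the self-loops at accepting states allow one component that has already accepted to wait indefinitely for its partner without disturbing the other's computation.

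The main technical nuisance is the initialization phase: because $M$'s input is a plain word over $\Sigma$, a preliminary sweep is required to install the paired encoding, and the boundary marker $\bullet$ is the device that tells $M$ when this sweep ends and genuine simulation begins. A minor technicality is the AS-mode convention for the empty word, which I would handle by giving $M$ a direct $\lambda$-transition from $s_0$ to a dedicated accepting state exactly when the chosen combined condition (union or intersection) requires $\lambda$ to be accepted.
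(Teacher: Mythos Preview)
Your proposal is correct and follows essentially the same product-construction approach as the paper, including the use of a linear extension of the componentwise partial order on $\Gamma_A\times\Gamma_B$ to certify freezing. The paper's version is slightly leaner: it dispenses with your initialization sweep and marker by letting the product transition act directly on raw input letters $a\in\Sigma$ (writing the pair $(\delta_A\text{-output},\delta_B\text{-output})$ immediately), and for union it handles a stuck component via auxiliary $\perp$-states rather than by totalizing the transition functions and adding accepting-state self-loops as you do.
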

\begin{proof}
    Consider two languages, accepted by $A=(\Sigma, \Gamma_1, Q_1, q_1, F_1, \delta_1)$ and $B=(\Sigma, \Gamma_2, Q_2, q_2, F_2, \delta_2)$, respectively. We construct the system $$C=(\Sigma, \Sigma\cup(\Gamma_1\times\Gamma_2), Q_1\times Q_2, (q_1,q_2), F_1\times F_2, \delta)$$ accepting the language $L(A)\cap L(B)$ as follows. The computation of $C$ will simulate the computations of $A$ and $B$ in parallel, similar to the classical finite automaton construction. By Lemma~\ref{lem:non-erasing} we may assume that the systems $A$ and $B$ do not erase any symbols, so the length of the word on the tape is the same throughout the computation, making the parallel simulation possible. The difference is that here we have to observe the freezing property, so the ordering of the tape alphabet $\Gamma$ and the transition function $\delta$ need to be carefully defined. Let the total orderings of $\Gamma_1$ and $\Gamma_2$ be $\leq_1$ and $\leq_2$, respectively. Those two total orderings naturally define the partial order $\leq_{12}$ on $\Gamma_1\times\Gamma_2$ as $(a,b)\leq_{12} (c,d)$ if $a\leq_1 c$ and $b\leq_2 d$. By Szpilrajn's theorem~\cite{Szpilrajn}, every partial order has a linear extension, and we can efficiently construct such a linear order compatible with $\leq_{12}$ by any well-known topological sorting algorithm (e.g. Kahn's~\cite{Kahn}), since $\leq_1$ and $\leq_2$ are finite. We define the transition function of $C$ as $\delta((q_1,q_2),a)=((q_1',q_2'),(b_1,b_2))$ where
    $$((q_1',b_1),(q_2',b_2))=\begin{cases}
          (\delta_1(q_1,a),\delta_2(q_2,a)) & \textrm{ if } a\in \Sigma  \\
          (\delta_1(q_1,a_1),\delta_2(q_2,a_2)) & \textrm{ if } a=(a_1,a_2)\notin \Sigma
    \end{cases}.$$
    By the definition of $\delta$ we can be certain that the freezing property is preserved, that is, symbols of $\Gamma_2$ are rewritten respecting the linear extension of $\leq_{12}$. The proof for the union follows the same line with some small changes. Since the computations are done in parallel, it may happen that one of the machines gets stuck, i.e., it has no outgoing transition from its current state for the current input letter. However, if the other machine accepts, the input should be accepted. To handle such situations we introduce pairs of states $(q_1,\perp)$ and $(\perp,q_2)$ for all $q_1\in Q_1, q_2\in Q_2$, where having $\perp$ as one of the state components means the respective machine could not continue its computation. We can reach such states by transitions $\delta((q_1,q_2),a)=((q_1',\perp),b)$ when $\delta_1(q_1,a)=(q_1',b)$ and $\delta_2(q_2,a)$ is undefined, and then add transitions of the form $\delta((q_1',\perp),a)=((q_1'',\perp),(b,b))$ if $\delta_1(q_1',a)=(q_1'',(b,b))$ and their counterpart for the $(\perp,q_2)$ states. This way the state component tracking the stuck machine's state will be frozen while the other can continue the computation.
\end{proof}

\begin{theorem}
    The class of languages accepted by Fr1TASS is closed under complement.
\end{theorem}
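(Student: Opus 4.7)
The plan is to simulate the given Fr1TASS and exploit the freezing property to detect non-halting computations, so that the acceptance decision can then be flipped. By the equivalence of accepting modes established in Section~2, I work in AS mode, and by Lemma~\ref{lem:non-erasing} I assume the given $A=(\Sigma,\Gamma_1,Q_1,q_1,F_1,\delta_1)$ never erases, so the tape length is constant throughout any computation. I construct $B$ that simulates $A$ and distinguishes three mutually exclusive outcomes of running $A$ on an input $w$: (i) $A$ reaches a state in $F_1$, (ii) $A$ halts because $\delta_1$ is undefined at a non-accepting state, and (iii) $A$ runs forever. In cases (ii) and (iii), $B$ enters its unique accepting state, while in case (i), $B$ enters a non-accepting sink where it remains.

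The key observation for detecting outcome (iii) is the sweep analysis from Section~3: since the tape length never changes and each position can be rewritten only finitely often under freezing, there is a point beyond which no further rewriting happens, and from then on the configuration at the start of every sweep shares the same tape content. By the pigeonhole principle, any $|Q_1|+1$ consecutive sweeps without a rewrite force $A$ to revisit a configuration and thus loop forever. To exploit this, $B$ marks the first position of the tape in its very first step, so it can recognize sweep boundaries whenever it later reads the marker. Its finite control carries the simulated state of $A$, a flag recording whether any rewrite has occurred in the current sweep, and a counter $c\in\{0,1,\dots,|Q_1|\}$ of consecutive unchanged sweeps. Each time $B$ reads the marker it updates the counter (reset to $0$ if the flag is set, incremented otherwise), and transfers to the accepting state as soon as the counter would exceed $|Q_1|$. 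If the simulated state enters $F_1$, $B$ moves to its non-accepting sink; if $\delta_1$ is undefined at a simulated non-accepting state of $A$, $B$ moves to its accepting state. The empty input is handled with the special AS-mode transition from $B$'s initial state, present exactly when $\lambda\notin L(A)_{AS}$.

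The main technical obstacle is preserving freezing once the marker is introduced, because arbitrary rewrites on the marked position are not allowed. I handle this by taking a marked copy $\bar{a}$ of each symbol $a\in\Gamma_1$ and extending the order on $\Gamma_2=\Gamma_1\cup\bar{\Gamma}_1$ so that every marked symbol is strictly below every unmarked symbol, with $\bar{a}\le\bar{b}$ if and only if $a\le_1 b$. Then the initial marking step $a_1\mapsto\bar{b}$ and every subsequent rewrite $\bar{a}\mapsto\bar{b}$ of the marker, each mirroring a transition of $A$, are non-increasing in this extended order, while unmarked-to-unmarked rewrites from $A$ remain valid exactly as before. Correctness of $B$ then reduces to verifying that the three outcomes above exhaust all behaviors of $A$ on any input, which is immediate from the sweep bound established in Section~3.
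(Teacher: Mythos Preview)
Your proposal is correct and follows essentially the same approach as the paper: mark the first cell to detect sweep boundaries, use the finite control to count consecutive sweeps in which no cell is rewritten, and declare a loop once that count exceeds $|Q_1|$, after which acceptance is flipped. The paper encodes the counter by taking $n{+}1$ state copies and routes both undefined transitions and detected loops into a single sink (made final), whereas you keep an explicit counter/flag in the control and separately route outcome~(i) to a non-accepting sink and outcomes~(ii)/(iii) to an accepting state; these are cosmetic differences, and your more careful treatment of the order on $\Gamma_1\cup\bar{\Gamma}_1$ to preserve freezing on the marked cell is a welcome elaboration of a point the paper leaves implicit.
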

\begin{proof}
For any Fr1TASS that halts on all inputs, it is enough to switch final and non-final states to accept the complement of its language. However, these systems may go into infinite loops, so a system accepting the complement needs to be able to detect that behavior. Each Fr1TASS can be completed with a `sink state', that is a state from which no other is reachable, and we can direct the transitions for any previously missing state-letter pair into that state. Additionally, we make $n+1$ copies of each state $p$, say, $p_1,\dots,p_{n+1}$, where $n$ is the number of states originally. For each $i$, the states indexed with $i$ are connected among them according to the original transition function, that is $\delta'(p_i,a)=(q_i,b)$ if $\delta(p,a)=(q,b)$, where $\delta$ and $\delta'$ are the transition functions of the original machine and of the machine for the complement, respectively. In the beginning, we mark the start of the input word with a special symbol to be able to keep track of it. Since the first symbol may need to be changed during the computation of the original machine, we add a marked copy of the original tape alphabet which will only be used to rewrite the first position. Whenever we read the start symbol in some state $p_i$, we continue on the $p_{i+1}$ states until one of two things happens:
\begin{itemize}
    \item We change one of the cells on the tape. In this case we continue the computation on the $p_1$ copies of the states until we reach the start mark.
    \item We reach the start mark from a $p_{n+1}$ state. This means that the machine made $n$ sweeps without changing any cell on the tape, so the original machine would go into an infinite loop. Instead, here we can simply transition to the sink state defined earlier.
\end{itemize}
The machine for the complement will have the newly introduced sink state as its only final state.
\end{proof}

Interestingly, the state complexity of intersection and union can be reduced at the expense of time complexity. 
This is because we can process the input twice instead of simulating both machines in parallel. 
First we process it according to the rules of the first machine, and then do so according to the rules of the second one. In order to do this, the number of states in the simulating system only needs to be the sum of the size of state sets of the two starting machines, instead of their product. Moreover, if we `recycle' the states, the size of the machine for the intersection/union need not increase beyond a constant plus the size of the larger machine participating in the intersection/union. When constructing the machine $C$ to accept $L(A)\cap L(B)$ (or $L(A)\cup L(B)$, respectively), we can reuse the states of the machine by having a tape alphabet with two tracks, say blue and red. Then, we can draw the red transitions completely independent of the blue transitions using the same states as vertices, therefore realizing $C$ on $\max\{|A|, |B|\}+k$ states. The additive constant term $k$ is needed, because after we finish simulating the first machine, we need to freeze the first track of the tape which requires some extra states to cycle through the input and mark each position frozen in the first track. 
We need to keep track of whether the first machine accepted or rejected the input. We can achieve this without extra states, though, by performing short-circuit evaluation: if the operation is intersection and the first machine rejects then we can reject right away; hence, if the simulation continues to the second machine, we know the first was accepted. 
The case for union can be treated analogously.

Regarding the regular operations concatenation and Kleene-star, the class of languages accepted by Fr1TASS is probably not closed, but we do not have the tools at present to prove that. In particular, we do not have any necessary conditions for a language to be accepted by Fr1TASS beyond the time complexity bound $O(n^2)$ mentioned earlier, and that bound is not enough to prove negative results regarding closure. The reason we think that the class is not closed under concatenation and Kleene-star is that in general such closure results require either non-deterministically guessing a decomposition of the input into factors of the constituent languages or the possibility of trying all possible decompositions. Neither option seems possible with Fr1TASS.

\begin{comment}
    
As an example, consider the language $L=\{a^k b^{2n} \mid k,n\geq 0\}=a^* b^* \cap \{w \mid |w|_b\equiv 1 \mod 2\}$. We construct Fr1TASS for the two intersecting languages and show how to put them together reusing the states to accept $L_{anbn}$. The two component languages are accepted by the systems in Fig.~\ref{fig:forintersection}.

\begin{figure}
    \centering
\begin{subfigure}[b]{0.55\textwidth}
    \centering
    \input{figures/astarbstar}
    \caption{Fr1TASS accepting $a^*b^*$.}
    \label{fig:astarbstar}
\end{subfigure}
\hfill
\begin{subfigure}[b]{0.4\textwidth}
    \centering
    \input{figures/beven}
    \caption{Fr1TASS accepting $\{w \mid |w|_b\equiv 0 \mod 2\}$.}
    \label{fig:beven}
\end{subfigure}
  
    \caption{The systems before intersection.}
    \label{fig:forintersection}
\end{figure}

\begin{figure}
    \centering
    \input{figures/intersection}
    \caption{The Fr1TASS accepting $a^*b^*\cap \{w \mid |w|_b\equiv 0 \mod 2\}$.}
    \label{fig:intersection}
\end{figure}

\end{comment}

\section{Decision problems and minimal Fr1TASS}\label{sect:decision}
Using a construction similar to the freezing 1TASS accepting $\{\#w\#w \mid w\in\Sigma^*\}$, we will show how to reduce the Post Correspondence Problem (PCP) to the emptiness of freezing 1TASS languages. From that we can deduce that emptiness, universality ($=\Sigma^*$) and equivalence are undecidable for this model. We will argue that the undecidability of equivalence also strongly suggests that finding minimal freezing 1TASS for a given language cannot be algorithmically accomplished. 

An instance of PCP consists of two sets of words $U=\{u_1,\dots, u_n\}$ and $V=\{v_1,\dots, v_n \}$ and the instance is positive if there exists some finite sequence $k_1,\dots, k_\ell$ (a solution), with $k_i\in \{1,\dots, n\}$, such that $u_{k_1}\cdots u_{k_\ell}=v_{k_1}\cdots v_{k_\ell}$. It is a well-known fact that it is undecidable whether an instance of PCP is positive (\cite{pcp}). 

Let us fix the alphabet of the PCP instance as $\Gamma$, that is, $U,V\subseteq \Gamma^*$, and let $\Gamma' = \{1,\dots, n\}\cup \Gamma$. The alphabet of the machine will be $\Sigma=\{\#\} \cup \bigcup_{a\in \Gamma'}\{a,\overline{a}\}$. Choose any ordering of the alphabet such that $a<\overline{a}$ for each $a\in \Gamma'$. We construct a freezing 1TASS that accepts the language $$\{\# k_1\cdots k_\ell \# u_{k_1}\cdots u_{k_\ell} \# v_{k_1}\cdots v_{k_\ell} \mid u_{k_1}\cdots u_{k_\ell} = v_{k_1}\cdots v_{k_\ell}\},$$ where $k_i, u_i, v_i  \in (\bigcup_{a\in\Gamma'}\{\overline{a}\})^*$. The machine needs to check whether the input satisfies the following three conditions: (1) the middle part is indeed $u_{k_1}\cdots u_{k_\ell}$, (2) the last part is indeed $v_{k_1}\cdots v_{k_\ell}$ and (3) check whether $u_{k_1}\cdots u_{k_\ell} = v_{k_1}\cdots v_{k_\ell}$. As (2) can be done the same way as (1) and in parallel to it, and (3) has been illustrated before as the machine for $\{\#w\#w\}$, we will only detail (1). Figure~\ref{fig:pcp} illustrates the part of the system for checking (1). Since the factor between the second and third $\#$ and the one after the third $\#$ needs to be checked twice, first for (1) and (2), respectively, then for (3), all the input except the separators $\#$ needs to be marked by overlines at the beginning.

\begin{figure}
    \begin{center}
\begin{tikzpicture}[scale=0.15]
\tikzstyle{every node}+=[inner sep=0pt]
\draw [black] (6.9,-9) circle (3);
\draw (6.9,-9) node {$(1)$};
\draw [black] (20.2,-9) circle (3);
\draw (20.2,-9) node {$s_i$};
\draw [black] (36.3,-9) circle (3);
\draw (36.3,-9) node {$c_i$};
\draw [black] (58.3,-9) circle (3);
\draw (58.3,-9) node {$d_i$};
\draw [black] (46.2,-23.3) circle (3);
\draw (46.2,-23.3) node {$d_i$};
\draw [black] (30.8,-23.3) circle (3);
\draw (30.8,-23.3) node {$e_i$};
\draw [black] (75.1,-9) circle (3);
\draw (75.1,-9) node {$(2)$};
\draw [black] (20.2,-42.3) circle (3);
\draw (20.2,-42.3) node {$s_j$};
\draw [black] (36.3,-42.3) circle (3);
\draw (36.3,-42.3) node {$c_j$};
\draw [black] (58.3,-42.3) circle (3);
\draw (58.3,-42.3) node {$d_j$};
\draw [black] (46.2,-56.3) circle (3);
\draw (46.2,-56.3) node {$d_j$};
\draw [black] (30.8,-56.3) circle (3);
\draw (30.8,-56.3) node {$e_j$};
\draw [black] (9.9,-9) -- (17.2,-9);
\fill [black] (17.2,-9) -- (16.4,-8.5) -- (16.4,-9.5);
\draw (13.55,-8.5) node [above] {$\overline{i}/i$};
\draw [black] (18.877,-6.32) arc (234:-54:2.25);
\draw (20.2,-1.75) node [above] {$\overline{n}/\overline{n}$};
\fill [black] (21.52,-6.32) -- (22.4,-5.97) -- (21.59,-5.38);
\draw [black] (23.2,-9) -- (33.3,-9);
\fill [black] (33.3,-9) -- (32.5,-8.5) -- (32.5,-9.5);
\draw (28.25,-8.5) node [above] {$\#/\#$};
\draw [black][dashed] (39.3,-9) -- (55.3,-9);
\fill [black] (55.3,-9) -- (54.5,-8.5) -- (54.5,-9.5);
\draw (47.3,-9.5) node [below] {$\overline{u_i}\mbox{ }/\mbox{ }u_i$};
\draw [black] (34.977,-6.32) arc (234:-54:2.25);
\draw (36.3,-1.75) node [above] {$a/a$};
\fill [black] (37.62,-6.32) -- (38.5,-5.97) -- (37.69,-5.38);
\draw [black] (56.36,-11.29) -- (48.14,-21.01);
\fill [black] (48.14,-21.01) -- (49.04,-20.72) -- (48.27,-20.08);
\draw (52.8,-17.59) node [right] {$\#/\#$};
\draw [black] (43.2,-23.3) -- (33.8,-23.3);
\fill [black] (33.8,-23.3) -- (34.6,-23.8) -- (34.6,-22.8);
\draw (38.5,-22.8) node [above] {$\#/\#$};
\draw [black] (29.01,-20.89) -- (21.99,-11.41);
\fill [black] (21.99,-11.41) -- (22.06,-12.35) -- (22.86,-11.76);
\draw (26.08,-14.76) node [right] {$\overline{i}/i$};
\draw [black] (74.113,-11.832) arc (-22.26999:-121.94994:28.089);
\fill [black] (74.11,-11.83) -- (73.35,-12.38) -- (74.27,-12.76);
\draw (59.23,-28.53) node [below] {$\#/\#$};
\draw [black] (48.88,-21.977) arc (144:-144:2.25);
\draw (53.45,-23.3) node [right] {$\overline{a}/\overline{a}$};
\fill [black] (48.88,-24.62) -- (49.23,-25.5) -- (49.82,-24.69);
\draw [black] (56.977,-6.32) arc (234:-54:2.25);
\draw (58.3,-1.75) node [above] {$\overline{a}/\overline{a}$};
\fill [black] (59.62,-6.32) -- (60.5,-5.97) -- (59.69,-5.38);
\draw [black] (28.12,-24.623) arc (-36:-324:2.25);
\draw (23.55,-23.3) node [left] {$n/n$};
\fill [black] (28.12,-21.98) -- (27.77,-21.1) -- (27.18,-21.91);
\draw [black] (1.2,-9) -- (3.9,-9);
\fill [black] (3.9,-9) -- (3.1,-8.5) -- (3.1,-9.5);
\draw [black] (23.2,-42.3) -- (33.3,-42.3);
\fill [black] (33.3,-42.3) -- (32.5,-41.8) -- (32.5,-42.8);
\draw (28.25,-42.8) node [below] {$\#/\#$};
\draw [black][dashed] (39.3,-42.3) -- (55.3,-42.3);
\fill [black] (55.3,-42.3) -- (54.5,-41.8) -- (54.5,-42.8);
\draw (47.3,-42.8) node [below] {$\overline{u_j}/u_j$};
\draw [black] (56.977,-39.62) arc (234:-54:2.25);
\draw (58.3,-35.05) node [above] {$\overline{a}/\overline{a}$};
\fill [black] (59.62,-39.62) -- (60.5,-39.27) -- (59.69,-38.68);
\draw [black] (34.977,-39.62) arc (234:-54:2.25);
\draw (36.3,-35.05) node [above] {$a/a$};

\fill [black] (17.2,-42.42) -- (16.6,-41.6) -- (16.3,-43);
\draw [black] (7,-12) arc (150:247:21.5);
\draw (0,-22.5) node [above] {$\overline{j}/j$};

\fill [black] (37.62,-39.62) -- (38.5,-39.27) -- (37.69,-38.68);
\draw [black] (17.449,-41.134) arc (274.76512:-13.23488:2.25);
\draw (14.03,-36.7) node [above] {$\overline{n}/\overline{n}$};
\fill [black] (19.45,-39.41) -- (19.88,-38.57) -- (18.89,-38.65);
\draw [black] (29.34,-25.92) -- (21.66,-39.68);
\fill [black] (21.66,-39.68) -- (22.49,-39.23) -- (21.61,-38.74);
\draw (24.84,-31.6) node [left] {$\overline{j}/j$};
\draw [black] (43.2,-56.3) -- (33.8,-56.3);
\fill [black] (33.8,-56.3) -- (34.6,-56.8) -- (34.6,-55.8);
\draw (38.5,-55.8) node [above] {$\#/\#$};
\draw [black] (48.88,-54.977) arc (144:-144:2.25);
\draw (53.45,-56.3) node [right] {$\overline{a}/\overline{a}$};
\fill [black] (48.88,-57.62) -- (49.23,-58.5) -- (49.82,-57.69);
\draw [black] (30.144,-59.215) arc (15.0577:-272.9423:2.25);
\draw (23.97,-62.04) node [below] {$n/n$};
\fill [black] (28.09,-57.55) -- (27.18,-57.28) -- (27.44,-58.24);
\draw [black] (28.99,-53.91) -- (22.01,-44.69);
\fill [black] (22.01,-44.69) -- (22.1,-45.63) -- (22.89,-45.03);
\draw (26.07,-47.9) node [right] {$\overline{j}/j$};
\draw [black] (56.34,-44.57) -- (48.16,-54.03);
\fill [black] (48.16,-54.03) -- (49.06,-53.75) -- (48.31,-53.1);
\draw (52.8,-50.75) node [right] {$\#/\#$};
\draw [black] (76.654,-11.565) arc (28.61712:-114.86546:33.268);
\fill [black] (76.65,-11.57) -- (76.6,-12.51) -- (77.48,-12.03);
\draw (72.26,-51.71) node [right] {$\#/\#$};
\draw [black] (27.81,-56.076) arc (-97.73263:-237.00462:24.922);
\fill [black] (17.59,-10.48) -- (16.65,-10.49) -- (17.19,-11.33);
\draw (6.09,-37.21) node [left] {$\overline{i}/i$};
\end{tikzpicture}
\end{center}
    \caption{The parts of the 1TASS for matching the first portion of the input containing the numbers, $k_1\cdots k_\ell$, to the second portion, $u_{k_1}\cdots u_{k_\ell}$. If the number read is $i$, that is, the symbol $\overline{i}$, then we continue from $s_i$, if it is $\overline{j}$ then continue from $s_j$, and so on. Then, the machine looks for the $\#$ symbol after which it ignores the already matched parts of $u_{k_1}\cdots u_{k_\ell}$. Finding the first unmatched symbols, it matches them against $u_i$, after which it returns to the beginning and reads the next number. }
    \label{fig:pcp}
\end{figure}
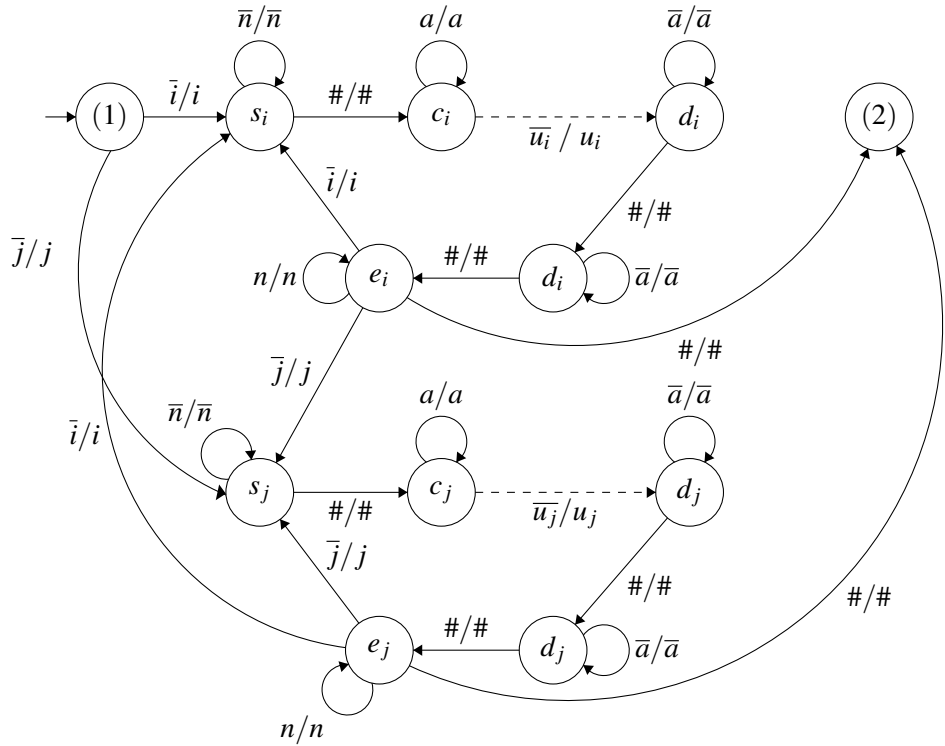

\begin{enumerate}
    \item Checking whether a word equals $u_i=x_1\cdots x_m$ is easy: we set up $m+1$ states $q_0,\dots,q_m$ such that $\delta(q_i,\overline{x_{i}})=(q_{i+1},x_i)$. For all $a\neq x_i$, the state $q_i$ has no outgoing transitions, therefore immediately rejecting the input on reading those letters.
    \item We read the first unmatched number after the first $\#$, say $i$. We move without changing over all following symbols until we reach the next $\#$. Then move over all matched symbols, i.e., symbols without overline. From the first symbol with overline, we match $u_i$ to the input, as above. If successful, move over all following symbols until we meet the second $\#$. Move over all symbols without overline and start the process again.
\end{enumerate}

This 1TASS will accept the solutions to the PCP instance, if any. Since PCP is undecidable, deciding whether the language accepted by a freezing 1TASS is empty, is also undecidable. This means that language equivalence is undecidable: if we let freezing 1TASS $A$ and $B$ be such that $A$ does not accept any input, while $B$ accepts the solutions of a PCP instance, then deciding equivalence amounts to deciding whether the PCP instance is positive. Similarly, if we let $L(C)=\Sigma^*\setminus L(B)$, where $B$ accepts the solutions to a PCP instance, then a decision algorithm that could tell whether $L(C)=\Sigma^*$, would decide whether the PCP instance has solutions, so universality is also undecidable.\\

\noindent{\bf Minimization of 1TASS. } From the undecidability of the language equivalence, we can draw certain conclusions regarding the minimization of such systems. Say we define minimal 1TASS as ones having the fewest number of states and/or transitions. We instantly get that the following statements cannot both be true, otherwise equivalence would be decidable by the same isomorphism checking method as for DFA:
\begin{enumerate}
    \item For each freezing 1TASS $A$ there is a unique (up to renaming the states) minimal 1TASS $B$ with $L(A)=L(B)$.
    \item There is an algorithm to find for each freezing 1TASS $A$ a minimal freezing 1TASS $B$ with $L(A)=L(B)$.
\end{enumerate}

If 1. holds then we cannot find the unique minimal system. Therefore we could assume that 1. does not hold and try to devise and algorithm for finding \emph{a} minimal system.

Another possibility is to define minimal systems more tightly, in which case minimization algorithms might exist. We suggest the following possible alternative definitions for a freezing 1TASS $A$ to be minimal:\\
1. $A$ does not contain strongly equivalent states, i.e., states $p,q$ such that $\delta(p,a)=\delta(q,a)$ for all $a\in\Sigma$. This case is straightforward to deal with along with any unreachable states, but yields little information about the similarity of Fr1TASS.\\
2. No proper subset of the system (removed transitions or states) accepts $L(A)$. Even the question whether minimality is decidable under this definition is nontrivial, let alone finding such a minimal system for a given Fr1TASS.\\

\section{Fr1TASS with no auxiliary symbols}\label{sect:auxiliary}
In this section we look at Fr1TASS that cannot have `auxiliary' symbols (which cannot appear in the input, but can occur on the tape during the computation), that is, $\Sigma=\Gamma$. This type of restriction leads to dramatic changes in computing power even in the case of machines that can rewrite cells arbitrarily many times~\cite{HEMASPAANDRA2005163}.
For these systems we can show that AS mode is incomparable to ET mode. Simulating AS with ET mode as done in the general case in Lemma~\ref{lem:ETAS} does not work. This is because we can no longer assume that the AS mode machines do not erase their tapes, as the technique used in Lemma~\ref{lem:non-erasing} is not applicable anymore due to the lack of symbols that can stand in for erased ones. In fact, unary languages provide the proof that under the no-auxiliary-symbols restriction, AS and ET are incomparable.

\begin{lemma}
    For each Fr1TASS $A=(\{a\},\{a\},Q,q_0, F, \delta)$, the language $L(A)_{AS}$ accepted with accepting state is of the form $\{a^n \mid n\geq k\}$ for some fixed $k$, and the language $L(A)_{ET}$ accepted with empty tape is either finite or equal to $a^*$.
\end{lemma}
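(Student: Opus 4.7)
The plan is to exploit that when $\Gamma = \Sigma = \{a\}$, every transition has the form $\delta(q,a) = (q',a)$ or $\delta(q,a) = (q',\lambda)$, so the sequence of states visited during a computation depends only on the initial state, not on the input length; the input length merely controls how many steps the computation survives. Let $\sigma_0 = q_0, \sigma_1, \sigma_2, \ldots$ be this iterated state sequence (truncated at the smallest $T_{\max}$ for which $\delta(\sigma_{T_{\max}}, a)$ is undefined, or infinite otherwise), and let $e_t$ denote the number of erasing transitions among its first $t$ steps. On input $a^n$ the tape at time $t$ has length $n - e_t$, and the computation reaches $\sigma_T$ iff $T \leq T_{\max}$ and $n > e_{T-1}$.

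For the AS case, the set of $n \geq 1$ such that $a^n$ is accepted equals $\bigcup_{T \geq 1,\, \sigma_T \in F}\{n : n \geq e_{T-1}+1\}$, a union of upward-closed subsets of $\mathbb{N}$, hence itself upward closed and therefore of the form $\{n : n \geq k\}$ for some $k$ (with $k = \infty$ corresponding to the empty language). Combined with the paper's empty-input convention in Section~2 (which handles $n = 0$ precisely when $q_0 \in F$), this yields $L(A)_{AS} = \{a^n : n \geq k\}$ for an appropriate $k$.

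For the ET case, the state sequence is either finite or eventually periodic. If it is finite, or if its eventual period contains no erasing transition, then the values $e_t$ are bounded above by some constant $N$, so only $n \leq N$ can ever equal $e_T$ and the language is contained in $\{a^n : 0 \leq n \leq N\}$, hence finite. Otherwise the period erases at least once, so $e_t \to \infty$; the state sequence is then automatically infinite, so the computation never gets stuck, and for every $n \geq 0$ there is a first $T$ with $e_T = n$, at which time the tape has length $n - e_T = 0$ and $a^n$ is accepted. Hence $L(A)_{ET} = a^*$.

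The proof is essentially structural once the input-independence of the state sequence is observed; the only care needed is in the edge cases of the AS mode (the special $\lambda$-transition when $q_0 \in F$, and the possibility that the accepted language is empty) and in confirming that $\lambda \in L(A)_{ET}$ by reflexivity of $\to_M^*$, which keeps the ET classification valid for the trivial input.
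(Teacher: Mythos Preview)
Your proof is correct and takes essentially the same approach as the paper: both exploit that, with $\Gamma=\Sigma=\{a\}$, the state trajectory is input-independent, so acceptance is governed solely by where the erasing transitions sit along that trajectory. The paper phrases this geometrically via the ``loop with a handle'' shape of the transition diagram (erasing transitions on the handle versus in the loop), whereas you phrase it analytically through the sequence $(\sigma_t)$ and the counter $e_t$; these are equivalent formulations of the same observation, with your version being a bit more explicit about the edge cases.
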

\begin{proof}
    Just like in the case of deterministic finite automata, such systems $A$ have a transition diagram of a loop with a `handle', due to determinism. There are two types of transitions possible: erasing, that is, $\delta(q,a)=(q',\lambda)$ and non-erasing, that is, $\delta(q,a)=(q',a)$. In AS mode the system accepts and halts as soon as it reaches a final state which means that any input longer than the distance from the initial state to the first final state will be accepted. In fact, any input with more letters than the number of erasing transitions on the path from initial to final state will also be accepted. In ET mode for the system to accept anything other than the empty word, it needs to have at least one erasing transition. If the only such transitions are on the `handle', then the accepted language is finite, since the system can only erase finitely many symbols from the tape. If there is an erasing transition in the loop, then all inputs on which the machine reaches the loop will be accepted, since the machine will keep looping until all letters are erased.
\end{proof}

We can also show that AS mode cannot be strictly stronger than ET mode when the tape alphabet is at least binary. Consider the language $L_{ab}=\{w \mid |w|_b\leq |w|_a \leq |w|_b+1\}$. A machine in ET mode can easily accept this language by reading an $a$, erasing it, moving to the right until it finds a corresponding $b$, erasing it and iterating this process (Fig.~\ref{fig:abnoaux}). However, using a `computation flattening' argument we can prove that a machine in AS mode cannot accept this language. 

\begin{figure}
    \centering
    \begin{tikzpicture}[scale=0.15]
\tikzstyle{every node}+=[inner sep=0pt]
\draw [black] (12.3,-27.4) circle (3);
\draw (12.3,-27.4) node {$1$};
\draw [black] (27.7,-27.4) circle (3);
\draw (27.7,-27.4) node {$2$};
\draw [black] (5.7,-27.4) -- (9.3,-27.4);
\fill [black] (9.3,-27.4) -- (8.5,-26.9) -- (8.5,-27.9);
\draw [black] (14.798,-25.753) arc (116.12306:63.87694:11.815);
\fill [black] (25.2,-25.75) -- (24.7,-24.95) -- (24.26,-25.85);
\draw (20,-24.05) node [above] {$a/\lambda$};
\draw [black] (25.151,-28.967) arc (-65.37436:-114.62564:12.361);
\fill [black] (14.85,-28.97) -- (15.37,-29.76) -- (15.79,-28.85);
\draw (20,-30.59) node [below] {$b/\lambda$};
\draw [black] (26.377,-24.72) arc (234:-54:2.25);
\draw (27.7,-20.15) node [above] {$a/a$};
\fill [black] (29.02,-24.72) -- (29.9,-24.37) -- (29.09,-23.78);
\draw [black] (10.977,-24.72) arc (234:-54:2.25);
\draw (12.3,-20.15) node [above] {$b/b$};
\fill [black] (13.62,-24.72) -- (14.5,-24.37) -- (13.69,-23.78);
\end{tikzpicture}
    \caption{Fr1TASS accepting $\{w \mid |w|_b\leq |w|_a \leq |w|_b+1\}$ without auxiliary symbols in ET mode.}
    \label{fig:abnoaux}
\end{figure}
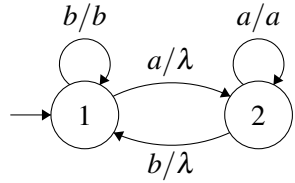

\begin{figure}
    \centering
    \includegraphics[scale=.5]{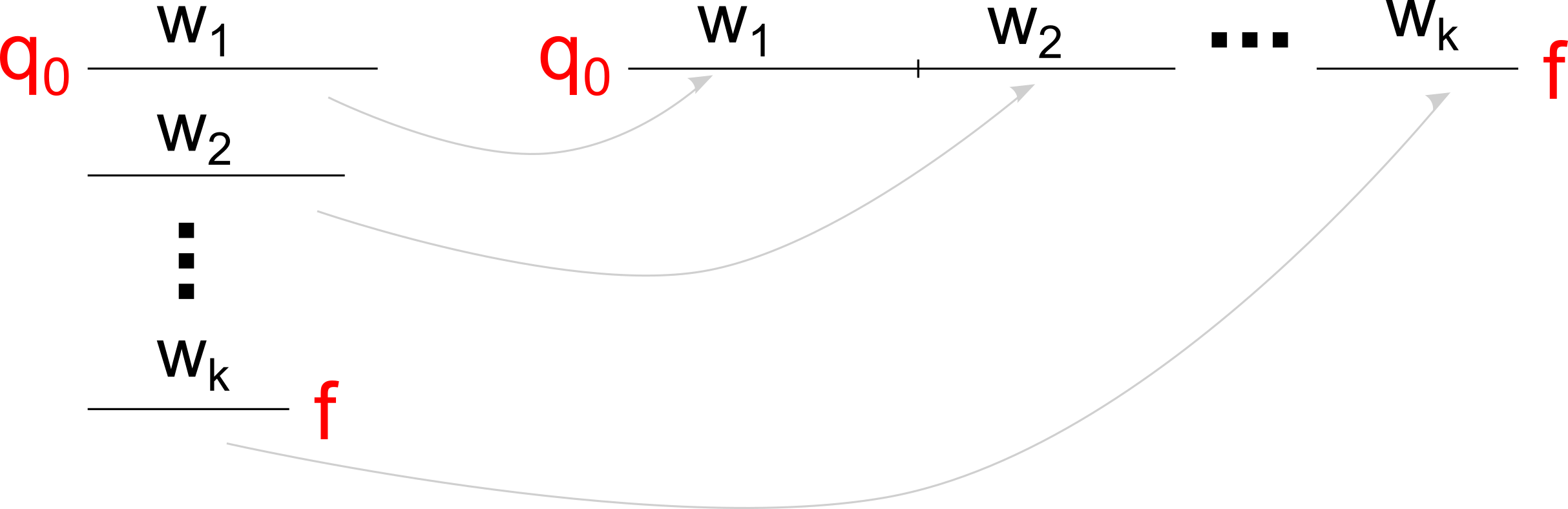}
    \caption{Flattening the computation of an AS mode system with no auxiliary symbols.}
    \label{fig:flattening}
\end{figure}

\begin{lemma}
    There is no Fr1TASS $A=(\Sigma, \Sigma, Q, q_0,F,\delta)$ such that $L(A)_{AS}=L_{ab}$.
\end{lemma}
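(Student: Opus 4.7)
My plan is to argue by contradiction, invoking the flattening principle already hinted at in the section. The core observation is this: if $A$ accepts some input $w$ by reaching a final state in exactly $T$ steps with read sequence $c_1 c_2 \cdots c_T \in \{a,b\}^T$, then $A$ also accepts the input $c_1 c_2 \cdots c_T$ in the same $T$ steps. Indeed, on input $c_1 \cdots c_T$ of length $T$, at each step $t \le T$ the front of the (current) tape is the still-unread original symbol $c_t$, because no original symbol has had time to cycle back within $T$ reads. Hence the same transitions fire in the same order and the machine enters the same final state $q_T$ at step $T$.

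I will specialize this to the input $a$, which lies in $L_{ab}$. Assuming $A$ accepts $L_{ab}$, it accepts $a$ in some $T_a \ge 1$ steps. The key structural claim is that the read sequence on $a$ has the shape $a^j b^k$ with $j \ge 1$, $k \ge 0$, $j + k = T_a$. This is where the no-auxiliary-symbols restriction and the freezing order on the binary tape alphabet together bite: when reading $a$ the only admissible writes are $a$, $\lambda$, or (if $b < a$) $b$, and when reading $b$ the only writes are $b$ or $\lambda$. Starting from tape $a$, the tape therefore always contains at most one symbol, which is $a$ for a while and then possibly $b$ for a while before emptying; once it holds a $b$ it can never hold an $a$ again. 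So every read is $a$ followed eventually by $b$'s.

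Applying flattening, $a^j b^k \in L(A) = L_{ab}$, forcing $k \le j \le k+1$. I then extend the input by one symbol $c$ (so length $T_a + 1$) chosen to push the resulting word out of $L_{ab}$: if $j = k+1$ take $c = a$, giving $a^{k+1} b^k a$ with $|w|_a = k+2$ and $|w|_b = k$; if $j = k$ take $c = b$, giving $a^k b^{k+1}$ with $|w|_a = k$ and $|w|_b = k+1$. Neither extended word belongs to $L_{ab}$. Yet on each extended input the first $T_a$ reads are exactly $c_1 \cdots c_{T_a} = a^j b^k$, because during the first $T_a$ steps we simply consume the first $T_a$ original symbols from the front, untouched by the appended tail. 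Therefore $A$ reaches $q_{T_a} \in F$ at step $T_a$ and accepts an input outside $L_{ab}$, contradicting $L(A)_{AS} = L_{ab}$.

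The main delicate step is the structural claim that the read sequence on the single-letter input $a$ is forced into the form $a^j b^k$; once this is pinned down, both the flattening equivalence and the extension argument are routine bookkeeping. It is also worth noting that this argument breaks without the restriction $\Sigma = \Gamma$: an auxiliary symbol $\Box$ smaller than $a$ would free the machine from the rigid $a$-then-$b$ structure of the read sequence, which is precisely how the general model escapes the same trap (cf. Lemma~\ref{lem:ETAS} and the flattening figure).
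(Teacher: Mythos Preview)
Your argument is correct. Both you and the paper rely on the same flattening principle: the sequence of letters read during an accepting run is itself a valid input on which the machine reaches the same final state within one sweep, and any letters appended after it cannot prevent acceptance since the final state is entered before they are seen.

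The difference is in the instantiation. The paper takes an arbitrary $w\in L_{ab}$, flattens its accepting computation into $w'=w_1\cdots w_k$, notes that $w'$ is accepted and therefore in $L_{ab}$, and then simply appends $aa$: since $|w'|_a\le |w'|_b+1$, two extra $a$'s always push the count out of range, with no case distinction required. You instead specialize to the one-letter input $a$, use the single-cell tape together with the freezing order on $\{a,b\}$ to force the read sequence into the shape $a^jb^k$, deduce $j\in\{k,k+1\}$ from flattening, and then append one carefully chosen letter depending on which case holds. Your route makes the role of the hypothesis $\Sigma=\Gamma$ more explicit (the rigid $a$-then-$b$ read structure genuinely comes from having no auxiliary symbol below the smaller of $a,b$), but it costs you the structural claim and a case split that the paper sidesteps entirely by appending two letters instead of one.
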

\begin{proof}
    Assume there is a Fr1TASS $A$ as above that accepts $L_{ab}$. Take any $w\in L_{ab}$ and let the word on the tape in sweep $i$ of the accepting computation on $w$ be $w_i$, as defined in the preliminaries. 
    As the word is accepted, there are finitely many, say $k$, sweeps. 
    If we concatenate the words in the sweeps, we get $w'=w_1\cdots w_k$. The obtained word $w'$ is a valid input word, because the system can only use the symbols of the input alphabet. 
    On the input $w'$, the system $A$ reaches the same accepting state as on the input $w$; thus $w'$ is accepted, and the computation requires a single sweep. Moreover, the same final state is reached on input $w'aa$, too. However, this is a contradiction, since $w'aa$ cannot have the required numbers of letter occurrences if $w'$ did.
\end{proof}

\noindent{\bf Language of palindromes.}
A very challenging problem is whether the Fr1TASS model can accept the language of palindromes over a non-unary alphabet. Intuitively the model should not be able to accept such a language for the reason described below, but we do not have a proof for this due to the lack of applicable necessary conditions. To verify whether the input is a palindrome, a machine would need to match pairs of letters at the same distance from the middle or from the start and end, respectively. Moreover, the matched pairs would need to be marked to keep track of which parts still need matching. However, in this model, we can only mark symbols \emph{after} a pattern has been identified. This means that if we start matching letters at the same distance from the middle, then the machine could not guess which is the next unmatched letter in the left half. Conversely, if the machine matches pairs based on their distance from the left and right end, respectively, then it could not guess the next unmatched letter in the right half of the input. 

For the restricted model with no auxiliary symbols in AS mode, we can prove that the language of palindromes cannot be accepted, by using the computation flattening argument seen earlier. Let $L_{pal}$ denote the language of palindromes over the binary alphabet $\{a,b\}$, that is, $L_{pal}=\{w\in \{a,b\}^* \mid w=w^R\}$ where $w^R$ is the reverse of $w$, that is, if $w=a_1\cdots a_n$ then $w^R=a_n\cdots a_1$.

\begin{theorem}
    For any Fr1TASS $A=(\Sigma, \Gamma, Q, q, F, \delta)$ we have $L_{pal} \neq L(A)_{AS}$.
\end{theorem}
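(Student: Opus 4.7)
The plan is to adapt the computation-flattening argument from the $L_{ab}$ lemma, and then exploit the fact that in the no-auxiliary-symbol setting, every intermediate tape content is itself a legal input. First, I will assume for contradiction that such an $A$ exists, and pick a concrete non-constant palindrome, say $w=aba$ (which is in $L_{pal}$ and therefore must be accepted). Since we may assume $q_0 \notin F$, acceptance occurs after some positive number of steps; let $m\ge 1$ be the first step at which $A$ enters an accepting state when started on $(q_0,w)$, and for each $t=1,\dots,m$ let $r_t\in\Sigma$ be the letter read (i.e., the symbol at the front of the tape) at step $t$. Set $w'=r_1r_2\cdots r_m$. Since $\Sigma=\Gamma$, $w'$ is a legal input word.

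The main technical step is the claim that, started on $(q_0,w')$, the machine performs exactly the same first $m$ transitions as on $(q_0,w)$. I would prove this by induction on $t$: the state entering step $t$ is the same on both inputs, and the symbol at the front of the tape at step $t$ equals $r_t$ in both cases. For $t\le |w|$ this is immediate because the initial tapes agree on the first $|w|$ positions and the queue discipline means the first $t$ reads depend only on those initial positions together with whatever was appended during steps $1,\dots,t-1$; for $t>|w|$ the appended letters coincide because the preceding transitions coincide. In particular, the step-$m$ configuration reached on input $w'$ has the same (accepting) state, so $w'\in L(A)$.

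Next, I will push the same induction one step further to conclude that, for every $\alpha\in\Sigma^*$, started on $(q_0,w'\alpha)$ the machine still performs the same first $m$ transitions: at each step $t\le m$ the front of the tape is still $r_t$ because $t\le m=|w'|$ and the initial positions $1,\dots,m$ of $w'\alpha$ coincide with $w'$, so $\alpha$ is not touched in the first $m$ steps. Because we may assume the accepting state has no outgoing transitions, $A$ halts at step $m$ and therefore accepts $w'\alpha$. Hence $w'\alpha\in L_{pal}$ for every $\alpha\in\Sigma^*$.

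The contradiction then comes from specializing $\alpha$. Since $|w'|=m\ge 1$, the string $w'$ has a well-defined first letter $c\in\{a,b\}$. Taking $\alpha=a$ forces $w'a$ to be a palindrome, hence its last letter $a$ must equal its first letter $c$; taking $\alpha=b$ forces $b=c$ analogously. These cannot both hold, which contradicts the assumption that $L(A)_{AS}=L_{pal}$. I expect the main obstacle to be stating the induction in step two cleanly, because the tapes on inputs $w$ and $w'$ genuinely differ (the appended suffixes grow on $w$ as the original letters are consumed, while on $w'$ they sit behind a longer prefix); the point that needs to be made precise is that the front letter, state, and appended letter at step $t$ depend only on the sequence of front letters $r_1,\dots,r_{t-1}$, not on the global tape content.
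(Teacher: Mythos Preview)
Your approach is correct and, in fact, more direct than the paper's. Both proofs rest on computation flattening---replacing the accepted palindrome by the word $w'$ consisting of the sequence of letters actually read---and both exploit that $\Sigma=\Gamma$ makes $w'$ a legal input. The difference lies in how the contradiction is extracted.

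The paper starts from a specific long palindrome $a^n b u b a^n$ with $n>|Q|$, flattens to $w'$, argues that $w'$ must itself be a palindrome and hence end in $ba^n$, and then applies a pigeonhole/pumping step on that unary suffix to manufacture an accepted non-palindrome $w'a^\ell$. Your argument bypasses the pumping step entirely: once you observe that for every $\alpha\in\Sigma^*$ the run on $w'\alpha$ agrees with the run on $w'$ for the first $m=|w'|$ steps (the appended $\alpha$ is never reached before the accepting state), you get $w'\Sigma^*\subseteq L(A)_{AS}$ immediately, and then comparing $\alpha=a$ with $\alpha=b$ finishes it. This is cleaner: no long input is required, no state-counting, and no appeal to the structure of $w'$ beyond $|w'|\ge 1$.

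Two minor remarks on presentation. First, your case split $t\le|w|$ versus $t>|w|$ in the induction is unnecessary: the front letter at step $t$ on input $w'$ equals $r_t$ uniformly for all $t\le m$, simply because the $t$-th symbol of $w'$ is $r_t$ by construction and the queue has not yet wrapped around; the comparison with the run on $w$ then follows since the transition at step $t$ is determined by the current state and $r_t$ alone. Second, the assumption that accepting states have no outgoing transitions is harmless but also not needed---reaching an accepting state at step $m$ already suffices for acceptance in AS mode.
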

\begin{proof}
    Suppose that there were a Fr1TASS $A=(\Sigma, \Gamma, Q, q, F, \delta)$ that accepts $L_{pal}$ in accepting state mode. Consider a long palindrome of the form $a^n bw ba^n$, where $n>|Q|$, which is accepted by the system in $k$ sweeps. Again, let $w_i$ denote the word on the tape at the beginning of sweep $i$. Concatenating those words yields the valid input $w'=w_1\cdots w_k$, which will be accepted with the same transitions as $w$, but all in one sweep. Since $w'$ is accepted, it must be a palindrome by our assumption, which means that its suffix must be $ba^n$. Due to the fact that $n$ is larger than the number of states in $A$, while reading the suffix $a^n$, the system must enter some state $p$ more than once, reading $a^\ell$ for some $\ell\geq 1$, between the first two traversals of $p$. However, this means that the system accepts also words of the form $w'a^{i\ell}$, for all $i\geq 0$. This results in a contradiction for $i=1$, because the suffix of $w'a^\ell$ is $ba^{n+\ell}$ while its prefix is $a^nb$. 
    Thus, non-palindromes would be also accepted by $A$.
\end{proof}

\section{Concluding remarks}
Apart from the decision problems in Section~\ref{sect:decision}, our results have mostly been positive. To establish the limits of the accepting power of Fr1TASS we need negative results separating Fr1TASS languages from other language classes. Although these systems can process symbols in the same position repeatedly, we think that the freezing property allows some form of a pumping lemma, perhaps in combination with a computation flattening argument seen in Section~\ref{sect:auxiliary}. Obtaining such a tool seems quite challenging and will be our main focus in future studies on the topic.

Perhaps with a tool as described above or adapting the Kolmogorov complexity argument of Li et al.~\cite{LLV92}, one could prove that the language of palindromes is not a Fr1TASS language. This is intuitively a fundamental limitation of such systems with first-in-first-out (FIFO) nature of processing and such questions have proved interesting in their own right with respect to other FIFO style models~\cite{M15}. If indeed palindromes cannot be accepted with Fr1TASS, then the language class is in some sense a natural counterpart of the class of context-free languages: membership is decidable efficiently and it contains the FIFO-like copy language instead of the LIFO-like palindromes. Interestingly, if one allows Fr1TASS to have non-determinism, then this FIFO limitation seems to vanish: such Fr1TASS could now guess which letters form pairs at equal distance from a reference point (middle or the ends) and verify the guess by marking symbols. The power of nondeterministic Fr1TASS is another topic worth further exploration in our opinion.

Finally, we would like to mention a computational complexity aspect that could be investigated with respect to Fr1TASS. The computation happens in sweeps and those sweeps are a natural resource to measure as the complexity of a given computation. Based on the amount of this resource used, one can define and study asymptotic complexity classes similarly to the case of iterated uniform finite transducers~\cite{KMMP22} and one-way jumping finite automata~\cite{FMW22}.

\bibliographystyle{eptcs}
\bibliography{fr1tass}

\begin{thebibliography}{10}
\providecommand{\bibitemdeclare}[2]{}
\providecommand{\surnamestart}{}
\providecommand{\surnameend}{}
\providecommand{\urlprefix}{Available at }
\providecommand{\url}[1]{\texttt{#1}}
\providecommand{\href}[2]{\texttt{#2}}
\providecommand{\urlalt}[2]{\href{#1}{#2}}
\providecommand{\doi}[1]{doi:\urlalt{https://doi.org/#1}{#1}}
\providecommand{\eprint}[1]{arXiv:\urlalt{https://arxiv.org/abs/#1}{#1}}
\providecommand{\bibinfo}[2]{#2}

\bibitemdeclare{article}{CFY16}
\bibitem{CFY16}
\bibinfo{author}{Hiroyuki \surnamestart Chigahara\surnameend},
  \bibinfo{author}{Szil{\'{a}}rd~Zsolt \surnamestart Fazekas\surnameend} \&
  \bibinfo{author}{Akihiro \surnamestart Yamamura\surnameend}
  (\bibinfo{year}{2016}): \emph{\bibinfo{title}{One-Way Jumping Finite
  Automata}}.
\newblock {\slshape \bibinfo{journal}{Int. J. Found. Comput. Sci.}}
  \bibinfo{volume}{27}(\bibinfo{number}{3}), p. \bibinfo{pages}{391},
  \doi{10.1142/S0129054116400165}.

\bibitemdeclare{article}{CockeM1964}
\bibitem{CockeM1964}
\bibinfo{author}{John \surnamestart Cocke\surnameend} \&
  \bibinfo{author}{Marvin \surnamestart Minsky\surnameend}
  (\bibinfo{year}{1964}): \emph{\bibinfo{title}{Universality of Tag Systems
  with $P = 2$}}.
\newblock {\slshape \bibinfo{journal}{Journal of the ACM}}
  \bibinfo{volume}{11}(\bibinfo{number}{1}), pp. \bibinfo{pages}{15--20},
  \doi{10.1145/321203.321206}.

\bibitemdeclare{article}{Cook2004}
\bibitem{Cook2004}
\bibinfo{author}{Matthew \surnamestart Cook\surnameend} (\bibinfo{year}{2004}):
  \emph{\bibinfo{title}{Universality in Elementary Cellular Automata}}.
\newblock {\slshape \bibinfo{journal}{Complex Systems}} \bibinfo{volume}{15},
  pp. \bibinfo{pages}{1--40}, \doi{10.25088/ComplexSystems.15.1.1}.

\bibitemdeclare{article}{FMW22}
\bibitem{FMW22}
\bibinfo{author}{Szil{\'{a}}rd~Zsolt \surnamestart Fazekas\surnameend},
  \bibinfo{author}{Robert \surnamestart Mercas\surnameend} \&
  \bibinfo{author}{Ol{\'{\i}}via \surnamestart Wu\surnameend}
  (\bibinfo{year}{2022}): \emph{\bibinfo{title}{Complexities for Jumps and
  Sweeps}}.
\newblock {\slshape \bibinfo{journal}{J. Autom. Lang. Comb.}}
  \bibinfo{volume}{27}(\bibinfo{number}{1-3}), pp. \bibinfo{pages}{131--149},
  \doi{10.25596/jalc-2022-131}.

\bibitemdeclare{inproceedings}{GearyMSS2018}
\bibitem{GearyMSS2018}
\bibinfo{author}{Cody \surnamestart Geary\surnameend},
  \bibinfo{author}{Pierre{-}\'{E}tienne \surnamestart Meunier\surnameend},
  \bibinfo{author}{Nicolas \surnamestart Robertabanel\surnameend} \&
  \bibinfo{author}{Shinnosuke \surnamestart Seki\surnameend}
  (\bibinfo{year}{2018}): \emph{\bibinfo{title}{Proving the Turing Universality
  of Oritatami Co-Transcriptional Folding}}.
\newblock In: {\slshape \bibinfo{booktitle}{Proceedings of the 29th
  International Symposium on Algorithms and Computation (ISAAC 2018)}},
  {\slshape \bibinfo{series}{LIPIcs}} \bibinfo{volume}{123}, pp.
  \bibinfo{pages}{23:1--23:13}, \doi{10.4230/LIPIcs.ISAAC.2018.23}.

\bibitemdeclare{article}{HEMASPAANDRA2005163}
\bibitem{HEMASPAANDRA2005163}
\bibinfo{author}{Lane~A. \surnamestart Hemaspaandra\surnameend},
  \bibinfo{author}{Proshanto \surnamestart Mukherji\surnameend} \&
  \bibinfo{author}{Till \surnamestart Tantau\surnameend}
  (\bibinfo{year}{2005}): \emph{\bibinfo{title}{Context-Free Languages Can Be
  Accepted with Absolutely No Space Overhead}}.
\newblock {\slshape \bibinfo{journal}{Information and Computation}}
  \bibinfo{volume}{203}(\bibinfo{number}{2}), pp. \bibinfo{pages}{163--180},
  \doi{10.1016/j.ic.2005.05.005}.

\bibitemdeclare{article}{Kahn}
\bibitem{Kahn}
\bibinfo{author}{A.~B. \surnamestart Kahn\surnameend} (\bibinfo{year}{1962}):
  \emph{\bibinfo{title}{Topological Sorting of Large Networks}}.
\newblock {\slshape \bibinfo{journal}{Commun. ACM}}
  \bibinfo{volume}{5}(\bibinfo{number}{11}), p. \bibinfo{pages}{558–562},
  \doi{10.1145/368996.369025}.

\bibitemdeclare{article}{KMMP22}
\bibitem{KMMP22}
\bibinfo{author}{Martin \surnamestart Kutrib\surnameend},
  \bibinfo{author}{Andreas \surnamestart Malcher\surnameend},
  \bibinfo{author}{Carlo \surnamestart Mereghetti\surnameend} \&
  \bibinfo{author}{Beatrice \surnamestart Palano\surnameend}
  (\bibinfo{year}{2022}): \emph{\bibinfo{title}{Descriptional Complexity of
  Iterated Uniform Finite-State Transducers}}.
\newblock {\slshape \bibinfo{journal}{Information and Computation}}
  \bibinfo{volume}{284}, p. \bibinfo{pages}{104691},
  \doi{10.1016/j.ic.2021.104691}.
\newblock \bibinfo{note}{Selected Papers from DCFS 2019, the 21st International
  Conference on Descriptional Complexity of Formal Systems}.

\bibitemdeclare{inbook}{Kutrib2018}
\bibitem{Kutrib2018}
\bibinfo{author}{Martin \surnamestart Kutrib\surnameend},
  \bibinfo{author}{Andreas \surnamestart Malcher\surnameend} \&
  \bibinfo{author}{Matthias \surnamestart Wendlandt\surnameend}
  (\bibinfo{year}{2018}): \emph{\bibinfo{title}{Queue Automata: Foundations and
  Developments}}, pp. \bibinfo{pages}{385--431}.
\newblock \bibinfo{publisher}{Springer International Publishing},
  \bibinfo{address}{Cham}, \doi{10.1007/978-3-319-73216-9_19}.

\bibitemdeclare{article}{LLV92}
\bibitem{LLV92}
\bibinfo{author}{Ming \surnamestart Li\surnameend}, \bibinfo{author}{Luc
  \surnamestart Longpr\'{e}\surnameend} \& \bibinfo{author}{Paul \surnamestart
  Vit\'{a}nyi\surnameend} (\bibinfo{year}{1992}): \emph{\bibinfo{title}{The
  Power of the Queue}}.
\newblock {\slshape \bibinfo{journal}{SIAM Journal on Computing}}
  \bibinfo{volume}{21}(\bibinfo{number}{4}), pp. \bibinfo{pages}{697--712},
  \doi{10.1137/0221042}.

\bibitemdeclare{article}{M15}
\bibitem{M15}
\bibinfo{author}{J.~Andres \surnamestart Montoya\surnameend}
  (\bibinfo{year}{2015}): \emph{\bibinfo{title}{Open Problems Related to
  Palindrome Recognition: Are There Open Problems Related to Palindrome
  Recognition?}}
\newblock {\slshape \bibinfo{journal}{J. Autom. Lang. Comb.}}
  \bibinfo{volume}{20}(\bibinfo{number}{1}), p. \bibinfo{pages}{5–25},
  \doi{10.25596/jalc-2015-005}.

\bibitemdeclare{inproceedings}{PchelinaSST2022}
\bibitem{PchelinaSST2022}
\bibinfo{author}{Daria \surnamestart Pchelina\surnameend},
  \bibinfo{author}{Nicolas \surnamestart Schabanel\surnameend},
  \bibinfo{author}{Shinnosuke \surnamestart Seki\surnameend} \&
  \bibinfo{author}{Guillaume \surnamestart Theyssier\surnameend}
  (\bibinfo{year}{2022}): \emph{\bibinfo{title}{Oritatami Systems Assemble
  Shapes No Less Complex Than Tile Assembly Model ({aTAM})}}.
\newblock In: {\slshape \bibinfo{booktitle}{Proceedings of the 39th
  International Symposium on Theoretical Aspects of Computer Science (STACS
  2022)}}, {\slshape \bibinfo{series}{LIPIcs}} \bibinfo{volume}{219}, pp.
  \bibinfo{pages}{51:1--51:23}, \doi{10.4230/LIPIcs.STACS.2022.51}.

\bibitemdeclare{article}{pcp}
\bibitem{pcp}
\bibinfo{author}{Emil~L. \surnamestart Post\surnameend} (\bibinfo{year}{1946}):
  \emph{\bibinfo{title}{A Variant of a Recursively Unsolvable Problem}}.
\newblock {\slshape \bibinfo{journal}{Bull. Amer. Math. Soc.}}
  \bibinfo{volume}{52}, pp. \bibinfo{pages}{264--268},
  \doi{10.1090/S0002-9904-1946-08555-9}.

\bibitemdeclare{article}{Szpilrajn}
\bibitem{Szpilrajn}
\bibinfo{author}{Edward \surnamestart Szpilrajn\surnameend}
  (\bibinfo{year}{1930}): \emph{\bibinfo{title}{Sur l'extension de l'ordre
  partiel}}.
\newblock {\slshape \bibinfo{journal}{Fundamenta Mathematicae}}
  \bibinfo{volume}{16}, pp. \bibinfo{pages}{386--389},
  \doi{10.4064/fm-16-1-386-389}.

\bibitemdeclare{article}{WattersSYLL2016}
\bibitem{WattersSYLL2016}
\bibinfo{author}{Kyle~E. \surnamestart Watters\surnameend},
  \bibinfo{author}{Eric~J. \surnamestart Strobel\surnameend},
  \bibinfo{author}{Angela~M. \surnamestart Yu\surnameend},
  \bibinfo{author}{John~T. \surnamestart Lis\surnameend} \&
  \bibinfo{author}{Julius~B. \surnamestart Lucks\surnameend}
  (\bibinfo{year}{2016}): \emph{\bibinfo{title}{Cotranscriptional Folding of a
  Riboswitch at Nucleotide Resolution}}.
\newblock {\slshape \bibinfo{journal}{Nature Structural and Molecular Biology}}
  \bibinfo{volume}{23}(\bibinfo{number}{12}), pp. \bibinfo{pages}{1124--1131},
  \doi{10.1038/nsmb.3316}.

\bibitemdeclare{article}{YU89}
\bibitem{YU89}
\bibinfo{author}{Sheng \surnamestart Yu\surnameend} (\bibinfo{year}{1989}):
  \emph{\bibinfo{title}{A Pumping Lemma for Deterministic Context-Free
  Languages}}.
\newblock {\slshape \bibinfo{journal}{Information Processing Letters}}
  \bibinfo{volume}{31}(\bibinfo{number}{1}), pp. \bibinfo{pages}{47--51},
  \doi{10.1016/0020-0190(89)90108-7}.

\bibitemdeclare{inproceedings}{zaiontz1976}
\bibitem{zaiontz1976}
\bibinfo{author}{Charles \surnamestart Zaiontz\surnameend}
  (\bibinfo{year}{1976}): \emph{\bibinfo{title}{Circular Automata}}.
\newblock In: {\slshape \bibinfo{booktitle}{Proceedings of the 14th Annual
  Southeast Regional Conference (ACM-SE 14)}}, pp. \bibinfo{pages}{350--354},
  \doi{10.1145/503561.503635}.

\end{thebibliography}

\end{document}